\documentclass[sigplan,screen]{acmart}

\usepackage{style-files/preamble}

\acmYear{2026}\copyrightyear{2026}
\setcopyright{cc}
\setcctype[4.0]{by}
\acmConference[SOSP '26]{Symposium on Operating Systems Principles}{September 29--October 2, 2026}{Prague, Czech Republic}
\acmBooktitle{Symposium on Operating Systems Principles (SOSP '26), September 29--October 2, 2026, Prague, Czech Republic}
\acmDOI{10.1145/3830418.3843865}
\acmISBN{979-8-4007-2585-2/26/09}

\begin{document}
\title{\sys{}: Safe Interactive Transactions in the Presence of Byzantine Clients}

\author{Austin T. Li}
\affiliation{
    \institution{Cornell University}
    \country{}
}

\author{Daniel H. Lee}
\affiliation{
    \institution{Cornell University}
    \country{}
}

\author{Lorenzo Alvisi}
\affiliation{
    \institution{Cornell University}
    \country{}
}

\author{Natacha Crooks}
\affiliation{
    \institution{UC Berkeley}
    \country{}
}

\author{Florian Suri-Payer}
\affiliation{
    \institution{Cornell University / Databricks}
    \country{}
}

\renewcommand{\shortauthors}{Li et al.}

\begin{abstract}
Byzantine fault-tolerant (BFT) systems are, in principle, an appealing foundation for transactional applications involving mutually distrustful participants.
Yet their adoption has been hampered by two persistent stumbling blocks---performance and developer convenience---which are often in tension with one another.
Recent systems show promising progress on both fronts by shifting to a client-centric architecture; clients execute transactions locally and concurrently, while the system resolves any data conflicts to maintain database serializability.
We argue that, in its current form, this approach introduces a critical vulnerability: it leaves the integrity of the database exposed to Byzantine clients, which may issue malicious or incorrect transactions.
We address this threat with \sys{}, a framework that prevents Byzantine clients from compromising database integrity by executing rogue transactions.
\sys{} combines redundant execution to validate transaction outcomes with a flexible, heterogeneous policy framework for expressing an application's data integrity requirements.
We apply \sys{} to harden several existing BFT database systems and find that it imposes only modest overheads---3\%--16\% in throughput and 3\%--21\% in latency.
\end{abstract}

\begin{CCSXML}
<ccs2012>
   <concept>
       <concept_id>10010520.10010575</concept_id>
       <concept_desc>Computer systems organization~Dependable and fault-tolerant systems and networks</concept_desc>
       <concept_significance>500</concept_significance>
       </concept>
   <concept>
       <concept_id>10002978.10003006.10003013</concept_id>
       <concept_desc>Security and privacy~Distributed systems security</concept_desc>
       <concept_significance>500</concept_significance>
       </concept>
 </ccs2012>
\end{CCSXML}

\ccsdesc[500]{Computer systems organization~Dependable and fault-tolerant systems and networks}
\ccsdesc[500]{Security and privacy~Distributed systems security}

\keywords{Byzantine fault tolerance, databases, transactions, distributed systems}

\maketitle


\section{Introduction}
\label{sec:introduction}

Decentralized applications that support data sharing among mutually distrustful parties are increasingly deployed in domains such as healthcare and financial services~\cite{healthcare-blockchain,healthcare-blockchain2,diem,digital-euro,fnality}.
In these settings, no single participant can be trusted to manage shared state, yet all require strong consistency and integrity guarantees.
Byzantine fault-tolerant (BFT) systems address this tension by replicating state across multiple parties and ensuring agreement on a totally ordered, tamper-evident log of operations~\cite{pbft,autobahn,zyzzyva}.
As a result, BFT systems present an appealing foundation for transactional applications spanning organizational or administrative boundaries.

Despite this promise, BFT systems have seen limited adoption for general-purpose transactional workloads.
Two obstacles have consistently hindered their use: performance and developer convenience.

Traditional BFT designs follow the state machine replication (SMR) model, colocating application logic and state management at backend replicas~\cite{smr}.
This architecture enables strong safety guarantees---replicas deterministically execute application code and directly enforce application semantics.
Over the years, extensive research has improved the performance~\cite{pbft,zyzzyva,hotstuff}, robustness~\cite{aardvark}, and network adaptability~\cite{autobahn} of SMR-based BFT protocols.
However, this approach imposes significant constraints on application developers, who must embed logic into replica-side execution environments, often through custom domain specific languages, and forfeit the ability to scale or evolve application logic independently of the storage layer~\cite{pavlo-sigmod-keynote}.

To reconcile performance and usability, several recent systems have shifted to a client-centric architecture, in which clients (typically front-end servers) execute transactions locally and interactively issue database operations as part of transaction execution~\cite{basil,pesto,hrdb}.
Combined with an appropriate concurrency control mechanism, this design enables high throughput, supports modern transactional abstractions, and significantly improves developer ergonomics. 
The BFT database backend enforces serializability over a stream of client-issued operations, while application logic remains entirely client-side.

We argue that, while it improves performance and usability, this architectural shift unwittingly creates a {\em fundamental} vulnerability to database integrity.

By moving execution out of the trusted replication boundary, client-centric BFT databases implicitly trust clients to execute application logic faithfully.
A Byzantine client, however, may submit transactions that violate application-level invariants, embed malicious logic, or exploit subtle semantic assumptions—without violating serializability.
The BFT database ensures agreement and ordering of operations as issued, but it cannot ensure that those operations correspond to a valid execution of the application.
Thus, client-centric designs abandon a core safety guarantee traditionally provided by SMR-based systems: that all committed transactions reflect correct application execution.

This vulnerability is not an implementation artifact but a structural consequence of today's client-centric architectures.
Figure~\ref{fig:boundary} illustrates the execution boundary in these architectures: the BFT-hardened database enforces only database isolation properties, while application semantics are defined and executed entirely at the client.
As a result, database integrity is fully exposed to arbitrary  updates by Byzantine clients.
Addressing this safety gap requires more than stricter concurrency control---it demands a way to validate application execution itself.

This paper introduces \sys{},\footnote{\textbf{S}afe \textbf{in}teractive \textbf{tr}ansactions. Also, a nod to the process of {\em sintering}, which uses heat to strengthen a material without fully melting it.} a framework that restores to client-centric BFT databases a fundamental guarantee: objects can be updated only by correctly executed application logic.
Rather than trusting the execution of a single client, \sys{} enforces proactive validation: a transaction is accepted only if multiple independent clients re-execute it and agree on its outcome.
These {\em endorsements} serve as proof that the transaction's effects are consistent with correct application execution.

At the same time, na\"{i}vely validating all transactions redundantly would impose prohibitive overhead.
\sys{} therefore introduces a heterogeneous, object-level policy framework that allows applications to precisely specify their integrity requirements.
Different data objects may demand different levels of scrutiny---for example, financial records may require stronger validation than auxiliary metadata.
Policies determine how many validation clients must endorse transactions that modify a given object, ensuring that overhead scales with required safety rather than uniformly across the workload.

Designing such a framework raises several challenges. First, \sys{} must minimize redundant database access during re-execution to avoid overwhelming the backend.
It addresses this by forwarding sufficient execution context to validators, while ensuring Byzantine clients cannot manipulate this information to forge endorsements.
Second, transactions frequently span multiple objects with different policies; \sys{} must ensure that endorsements jointly satisfy all relevant object policies and, further, prevent objects with weaker integrity requirements from influencing objects with stronger ones.
Finally, integrity requirements may evolve over time.
\sys{} supports dynamic policy updates through {\em governance transactions} that take effect atomically during normal execution, treating policies themselves as first-class data objects.

We have implemented \sys{} by hardening several existing BFT systems, including Basil~\cite{basil}, Pesto~\cite{pesto}, and HotStuff~\cite{hotstuff} and \bftsmart{}~\cite{bftsmart} layered over transactional data stores with both SQL and key-value interfaces.
Across standard transactional workloads (\tpcc{}, \seats{}, and \smallbank{}), \sys{} incurs modest overheads of 3\%--16\% in throughput and 3\%--21\% in latency.
Additional microbenchmarks characterize how validation policies affect performance and quantify the residual influence Byzantine clients retain under different configurations.
In summary, this paper makes the following contributions:

\begin{itemize}
  \item We identify and formalize {\em application execution validity}, a safety guarantee lost by existing client-centric BFT database designs.  
  \item We present \sys{}, a validation-based framework that restores this guarantee using heterogeneous, object-level integrity policies. 
  \item We demonstrate that \sys{} can be retrofitted onto multiple BFT database systems with low overhead, significantly strengthening their safety against Byzantine clients.
\end{itemize}

\begin{figure}[t]
    \centering
    \includegraphics[width=\columnwidth]{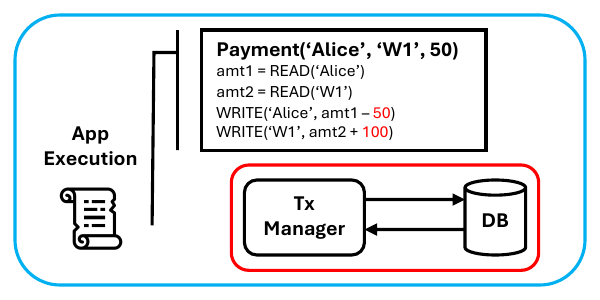}
    \caption{
        The safety guarantees of today's client-centric BFT databases only extend to the isolation properties of the transactions being executed, not to their semantics: nothing prevents Byzantine clients from issuing incorrect transactions ({\em e.g.}, \texttt{Payment} transactions that incorrectly update the customer and warehouse balances).
    }    

    \label{fig:boundary}
\end{figure}



\section{The Pitfall}
\label{sec:pitfall}

State machine replication (SMR) has long served as the foundation for building fault-tolerant distributed systems. 
In SMR, a system is modeled in terms of two components:  \emph{state variables}, which capture the system's state, and  \emph{commands}, which deterministically transform that state. 
Traditional SMR designs assign responsibility for both command execution and state maintenance to a set of replicas and require correct replicas to execute the same sequence of commands in the same order\footnote{
    The total-order requirement can be relaxed when commands commute~\cite{smr}. For simplicity, and without loss of generality, we ignore this optimization.} and update their local state accordingly. 

When applied to transactional data stores, this paradigm requires database replicas to both store shared state ({\em i.e.}, the values of database objects) and execute transactions. 
Concretely, correct replicas execute transactions whose reads and writes \one{} arise from valid, application-defined transaction logic and \two{} collectively satisfy the required isolation guarantees ({\em e.g.}, serializability).
In Byzantine settings, replicas further vote on proposed updates so that only sufficiently endorsed state transitions are applied to the database.

Architectures that embed application logic directly within the database—most notably via stored procedures—naturally satisfy these requirements. 
Because replicas execute application logic themselves, they can directly validate that all state updates arise from correct executions of approved transactions. 
However, this tight coupling comes at a steep cost: stored procedures are typically written in vendor-specific languages with limited tooling, making application logic harder to develop, test, maintain, and evolve. 

Client-centric BFT architectures were developed to overcome these limitations—and in many respects, they succeed. 
They allow application logic to be written in general-purpose languages and have been shown to achieve performance competitive with unreplicated databases on standard transactional workloads~\cite{basil,pesto}. 
Achieving these benefits, however, requires a subtle but consequential shift in how the SMR abstraction is instantiated. 

In these architectures, correct database replicas remain responsible for storing state and apply agreed-upon updates, but they no longer execute the commands that generate those updates. 
Instead, clients execute transactions, interleaving application logic with database operations, and submit only the resulting reads and writes to the replicated backend. 
As a result, replicas can enforce consistency properties over the operations they observe, but they can no longer determine whether those operations arise from correct executions of application-defined transactions. 
This shift breaks a foundational stipulation of the SMR contract. 
Once application logic resides outside the trusted replication boundary, replicas lose the ability to ensure that state variables are updated only by valid  executions of state machine commands.

Although we frame this discussion in terms of SMR, the issue we identify is not specific to state machine replication.

It arises more generally in BFT systems that decouple application execution from replicated agreement. 
In client-driven Byzantine storage and quorum-based systems~\cite{byzantine-quorum-systems,q/u,HQ,smallQS}, replicas ensure consistency over stored values or updates, but likewise lack visibility into the semantic correctness of client-side computation. 
When such abstractions are used as substrates for transactional or application-level systems, they inherit the same vulnerability: replicas may agree on updates without being able to validate that those updates result from correct executions of application logic. 
SMR makes the loss explicit by contrast with its stronger guarantees, but the underlying problem is architectural rather than SMR-specific.

Correctness notions rooted solely in ordering and isolation of reads and writes are  insufficient in client-centric architectures, as they fail to capture whether application logic itself was executed correctly. 
Addressing this gap requires an explicit model of application execution and a corresponding safety definition that extends beyond traditional database consistency.

\section{Model and Definitions}
\label{sec:model}

\subsection{System Model}
\label{subsec:model-sys}

\sys{} inherits the standard assumptions of prior BFT work~\cite{pbft,zyzzyva,basil,pesto}.
\sys{} assumes partial synchrony only to ensure liveness~\cite{partial-sync}.
A participant is \emph{correct} if it adheres to the protocol specification, and \emph{faulty} otherwise; faulty participants may deviate arbitrarily from their specification.
The adversary is strong but static: it may coordinate the actions of faulty clients and servers subject to the underlying fault thresholds, but cannot break standard cryptographic primitives.
We assume that all clients and servers are explicitly registered with the system and can be authenticated via signatures. We denote a signed message $m$ as $\langle m \rangle_\sigma$. 

\subsection{Safety with Interactive Transactions}

Traditional correctness definitions for databases are intentionally limited in scope: they characterize safety exclusively in terms of reads and writes, typically through isolation guarantees such as serializability~\cite{concurrency-control-recovery-db,critique-ansi-sql-isolation}.
These definitions align naturally with classical SMR systems, where replicas execute application logic themselves and thus implicitly validate command execution.

However, when application logic executes at clients, outside the replicated core, a Byzantine client may issue a sequence of reads and writes that is serializable yet does not correspond to any valid execution of an application-defined transaction.
For example, the \texttt{Payment} transaction in Figure~\ref{fig:boundary} remains serializable even when it violates fundamental application invariants.
Reasoning about such behaviors requires correctness definitions that explicitly account for application execution, not merely for database-level consistency.


\myhdr{Application Model}
Most applications predefine a set of transactions as part of their application logic.
Developers write this logic in a general-purpose programming language, where transactions interleave database requests with client-side computation.
We adopt this development model and refer to such transactions as \emph{interactive transactions}.
This approach is not only developer-friendly,  but allows applications to scale independently of the database backend~\cite{pavlo-sigmod-keynote}.

Formally, we model an application with interactive transactions as follows.


\begin{definition}\label{def:application}
    \defApplication{}
\end{definition}

The set of transactions $\mathcal{T}$ constrains which executions are correct for the application.
Systems that colocate application execution with application state can guarantee that only predefined transactions commit, and that each such transaction is executed in accordance with its specification~\cite{smr}.
In contrast, existing BFT systems with interactive transactions implicitly trust authenticated clients (some of which may be Byzantine!) to execute application logic correctly~\cite{basil,pesto}.
As argued in Section~\ref{sec:pitfall}, this trust is misplaced.

We therefore formalize safety directly in terms of application execution.

\begin{definition}\label{def:validity}
    \defValidity{}
\end{definition}

Invalid read-only transactions affect only the issuing client and thus we intentionally make no assumptions on them.
Application execution validity captures the core safety guarantee that is lost when shifting from server-side execution to client-side interactive transactions. 
\sys{} focuses specifically on restoring this core guarantee, enabling applications to adopt client-side interactive transactions without sacrificing this fundamental safety guarantee. 
Other concerns---such as ordering transactions execution, enforcing isolation levels, or validating a transaction's input---are orthogonal and addressed by existing mechanisms.

\section{\sys{} Framework}

This section presents the \sys{} framework, which restores \emph{application execution validity} in client-centric BFT databases. 
Building on the model introduced in Section~\ref{sec:model}, \sys{} provides fault tolerance against Byzantine clients by validating transaction execution at the application level, rather than relying solely on database-level consistency. 
We first introduce a transaction-centric fault model, then describe the safety and liveness properties \sys{} guarantees, and finally give a high-level overview of its operation.


\subsection{A Transaction-Centric Fault Model}
\label{subsec:framework-fault-model}

Defending against Byzantine clients requires introducing redundancy into transaction execution. 
A natural approach is to adopt a client-centric fault model analogous to replica-based fault tolerance~\cite{pbft,zyzzyva}, through heterogeneous trust assumptions~\cite{heterogeneous-paxos} or  by assuming a bound on the number of faulty clients system-wide. \sys{} in particular leverages the fact that its clients are typically front-end servers controlled by the same organization(s) that run its applications to assume the existence of a bound on the number of faulty clients system-wide. 
Such fault models, however, are coarse-grained: they treat all transactions and all application data uniformly, regardless of their semantics or importance. 
\sys{} instead adopts a \emph{transaction-centric} fault model that expresses fault tolerance in terms of \emph{objects} and the \emph{actions} transactions perform on them. 
This model allows applications to pay for safety selectively, enforcing stronger guarantees only where application semantics demand them. 
Our design draws inspiration from integrity labels in the information-flow control literature~\cite{biba-integrity} and instantiates them concretely as {\em integrity policies}. 

\myhdr{Integrity Policies}
Each object is assigned an integrity policy that reflects the maximum degree of adversarial collusion the application expects that object to face. Intuitively, an object with policy $x$ is protected against up to $x$ colluding Byzantine clients. 

Importantly, integrity policies need not be uniform. 
Applications commonly contain data with different security and correctness requirements. 
For instance, in the \tpcc{} benchmark~\cite{tpcc}, financial records ({\em e.g.}, customer balances) are significantly more sensitive than record-keeping data such as inventory counts, which can be reconciled through external validation. 
\sys{} allows applications to reflect these distinctions directly in their safety policy. 



\myhdr{Enforcing Policies}
\sys{} enforces integrity policies through \emph{transaction endorsement}.
To modify an object governed by policy $x$, a client must collect at least $x$ other independent client endorsements that corroborate its transaction execution. 
This ensures that coalitions of $x$ or fewer Byzantine clients cannot corrupt the object. 

Allowing heterogeneous policies introduces a new challenge: policies cannot be assigned independently of application semantics. 
If a transaction reads a low-policy object and uses it to influence a write to a high-policy object, then the effective integrity of the high-policy object is weakened. 
To prevent this, \sys{} adopts the principle of \emph{noninterference} from information-flow control~\cite{noninterference,language-based-ifc}.

\begin{definition}\label{def:noninterference}
    \defNoninterference{}
\end{definition} 

Transactions that violate noninterference are rejected by default.

\myhdr{Relaxing Noninterference} 
Strict noninterference is often too restrictive in practice~\cite{noninterference,language-based-ifc}. 
Some transactions legitimately transform low-integrity data into high-integrity updates by validating it against trusted information. 
In \tpcc{}, for example, a \texttt{Delivery} transaction may read order records (low policy) and update customer balances (high policy) after performing application-specific validation (Figure~\ref{fig:app-policy}).

\sys{} supports such patterns by allowing transactions to \emph{lift} the policy of selected read objects. 
Lifting is an explicit, application-controlled operation: objects that are lifted are treated as having a higher policy for information flow purposes during the transaction. 
For simplicity, \sys{} considers lifted objects as having policy as high as any object the transaction writes to, effectively excluding them from noninterference for that transaction.
Applications express lifting logic using deterministic \emph{lift functions}, which validate low-policy data according to the application's semantics. 
Algorithm~\ref{alg:lift-func} illustrates a lift function for validating orders against payments. Lift functions offer a clean and fully general mechanism to express semantic checks that, in their absence, one would attempt to encode within a transaction or enforce through  database consistency constrains.


Lifting is optional and conservative by default: transactions that neither satisfy noninterference nor explicitly lift required objects are rejected. 
Because lifting weakens policy protections, it is intended to be used sparingly and with care.

\begin{figure}[t]
\centering 
\includegraphics[width=\columnwidth]{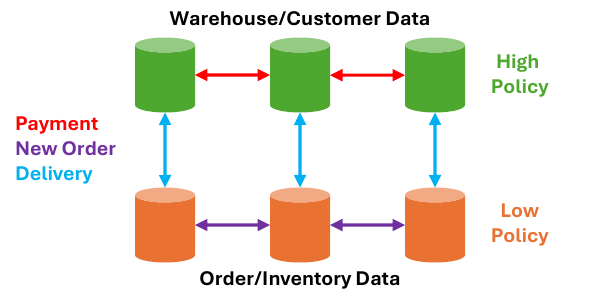} \caption{Example object policies and transactions for a transactional application ({\em e.g.}, \tpcc{}).} \label{fig:app-policy}
\end{figure} 

\begin{algorithm}[t]
\caption{
    Example Lift Function for \texttt{Delivery} $(T)$
}\label{alg:lift-func}
\begin{algorithmic}[1]
\State $\textit{lift} = []$
\For{$\textit{order} \in T$}
    \If{$\exists \ \textit{payment} \in T : \textit{payment.order} == \textit{order}$}
        \State $\textit{lift}.append(\textit{order})$
    \EndIf
\EndFor
\State \Return $\textit{lift}$
\end{algorithmic}
\end{algorithm}

\subsection{\sys{} Properties}
\label{subsec:framework-properties}

Under the transaction-centric fault model described above, \sys{} guarantees \emph{application execution validity}. 

\begin{theorem}\label{thm:sys-safety}
    \thmSysSafety{}
\end{theorem}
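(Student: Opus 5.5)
The theorem should follow by a reduction to the endorsement rule, argued one object at a time. Fix a committed read-write transaction $T$ and any object $o$ it writes, with policy $x$. By the enforcement rule, $T$ commits only if it carries at least $x$ endorsements from clients other than the issuer, each corroborating the same outcome: the same transaction identity in $\mathcal{T}$, inputs, read set and write set. Since the theorem assumes that at most $x$ clients relevant to $o$ are Byzantine, the $x+1$ distinct clients involved (the issuer plus $x$ endorsers) must include at least one correct client. Signatures make endorsements unforgeable, so that correct client really did re-execute $T$'s application logic and obtained the committed writes. Hence the writes to $o$ are exactly what a valid execution of a transaction in $\mathcal{T}$ produces on the recorded reads, which is what Definition~\ref{def:validity} requires for $o$.

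The steps run in order. \one{} Using signature unforgeability and the commit check performed by correct replicas, show that a committed transaction carries the required endorsements for the maximum policy among the objects it writes. This needs care when policies change concurrently: the governance-transaction mechanism must make the checked policy the one in force at the serialization point. \two{} Show that a correct endorser's signature binds the full execution trace, so agreeing endorsements imply identical deterministic re-executions. \three{} Use counting to obtain a correct endorser. \four{} Conclude validity for each written object, and take the conjunction over all written objects. Read-only transactions need no argument, since Definition~\ref{def:validity} says nothing about them.

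The main obstacle is the gap between having a correct endorser and having a valid execution. The endorser re-executes using execution context forwarded by the possibly Byzantine issuer, and the reads may include low-policy objects that faulty clients wrote. So "valid" has to be relative to the values read, and those reads must be authentic database state, meaning the correct endorser checks read versions against replica-certified data rather than trusting the issuer. Low-policy inputs then need the noninterference condition of Definition~\ref{def:noninterference}. I would prove by induction over the serialization order that every value influencing a write to an object of policy $x$ comes from objects whose writes were themselves valid under a bound of $x$, or from lifted objects. Lifted objects are covered because a correct endorser deterministically evaluated the lift function. This step is the hardest, and I would first try to state it as an invariant: every committed version of a policy-$x$ object is the output of a valid execution whose inputs have policy $\ge x$ or were lifted. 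Then I would check that the invariant is preserved by each commit, including governance transactions.
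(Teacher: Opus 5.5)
Your steps \one{}--\four{} are essentially the paper's proof, and on their own they suffice. The paper's argument runs as follows.
\begin{itemize}
\item A correct validator executes deterministically on forwarded results whose proofs certify both the response and its binding to the request identifier, so it can only endorse a faithful execution (Lemma~\ref{lemma:val-client-endorse-valid}).
\item For an object $O$ written by a committed $T$, it takes a correct replica that locally committed both $T$ and the latest governance transaction on $p_O$ (Assumption~\ref{assumption:conflict-iso-correct-replicas}).
\item It uses Theorem~\ref{thm:gov-txn-serializable} to conclude that $T$ passed the endorsement check against that policy version.
\item It counts the initiator plus $p_O$ valid endorsers against $f_O \le p_O$, and takes the conjunction over written objects (Lemmas~\ref{lemma:object-safety} and~\ref{lemma:all-object-safety}).
\end{itemize}
One correction to step \two{}: the signed hash covers the transaction identifier, the write set, and the read set \emph{with lifted keys removed}, so it does not bind the full trace. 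The validator's reads are pinned down by the certified forwarded results, and the hash is what exposes a divergent write set.

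You should drop the paragraph you call the main obstacle, because it targets a different property. Under Definition~\ref{def:validity} and the definition of faithful execution, validity is a per-transaction condition relative to $T$'s own inputs and recorded reads: running $T_i$ on them must yield exactly $T$'s write set. It does not matter whether those read values came from valid transactions or from high-policy objects. So once one correct client (initiator or endorser) agrees with $T$'s hash you are done, with no induction over the serialization order. Restricting how low-policy data influences high-policy writes is noninterference, which the paper proves separately (Theorem~\ref{thm:noninterference}).

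Moreover, your proposed invariant does not hold for the protocol once policies are versioned. The versioned flow check (Algorithm~\ref{alg:policy-leak-versioned}) skips any read whose value was written under a policy version that is no longer the latest. Suppose $O_1$ is written under policy $2$ and a governance transaction then lowers its policy to $1$. A later transaction can read that version of $O_1$, without lifting it, and commit a write to a policy-$5$ object, even though $O_1$'s policy is below $5$ both at write time and now. Had you made the theorem depend on that invariant, the proof would stall at exactly the step you flagged as hardest, even though the theorem itself does not need it.
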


This guarantee ensures that, given a sufficiently strong policy, all committed, non-read-only transactions correspond to faithful executions of application-defined transaction logic, even in the presence of Byzantine clients. 
\sys{} does not, however, prevent Byzantine actors from obstructing progress. 
Liveness is guaranteed only when correct clients can satisfy required policies by collecting endorsements from correct clients (without relying on responses from faulty ones).
We defer detailed proofs of correctness to an extended technical report~\cite{sintr-tr}.


\subsection{\sys{} Overview}
\label{subsec:framework-overview}

\begin{figure*}[t]
    \centering
    \includegraphics[width=0.8\textwidth]{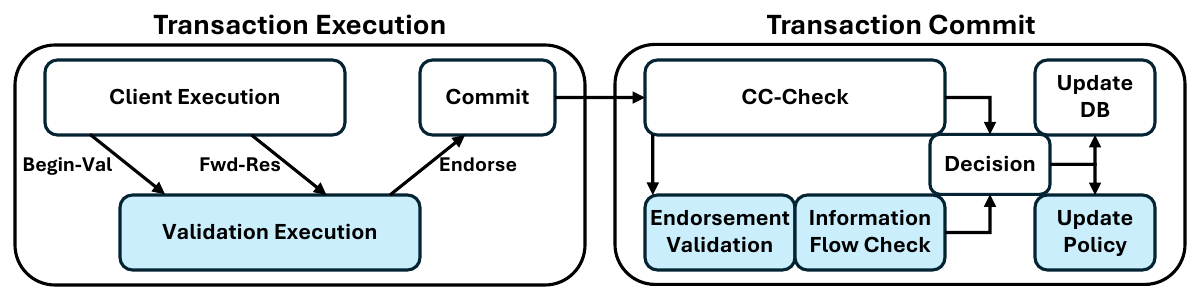}
    \caption{Overview of transaction execution and validation in \sys{}.}
    \label{fig:overview}
\end{figure*}

Figure~\ref{fig:overview} summarizes transaction processing in \sys{}. 
Clients execute transactions locally while requesting parallel validation from peer clients. 
To avoid additional database load, validation clients do not issue database requests and instead rely on the initiating client to provide the necessary execution context and responses, along with proofs certifying their correctness.

Upon completing their re-execution, validation clients return signed endorsements to signify that the initiating client performed the transaction candidly. 
The initiating client shares these endorsements with database replicas as part of its transaction commit request. 
Replicas then perform standard concurrency-control checks while additionally validating endorsements and enforcing information-flow constraints. 
Only transactions passing all checks are committed.
\sys{} also supports \emph{governance transactions} that update integrity policies during normal execution.

\sys{} is designed for environments in which Byzantine clients are expected but limited. 
Clients are authenticated, transactions can be audited~\cite{peerreview,iaccf}, and collecting endorsements forces Byzantine clients to provably collude. 
While \sys{} defaults to optimistic integrity policies, applications may adopt more conservative policies when stronger guarantees are required.

\section{\sys{} Protocol}

\sys{} is a general purpose framework, and can be used to augment any BFT database system supporting interactive transactions.
Our evaluation instantiates \sys{} atop several different BFT systems (Basil~\cite{basil}, Pesto~\cite{pesto}, and SMR-based transactional systems using HotStuff~\cite{hotstuff} and \bftsmart{}~\cite{bftsmart}).
Here, we abstract protocol-specific functionality and assume the underlying system exposes a standard transaction interface.
Each database object is identified by a unique \textit{key} ({\em e.g.}, a row's primary key in a relational database), and each \textit{key} is associated with a concrete policy.
For simplicity, policies are initially fixed across a key's lifetime; we discuss in Section~\ref{subsec:design-gov-txn} how to adjust policies at runtime.

\subsection{Client Endorsement Collection Protocol}
\label{subsec:design-endorse-collect}

Clients execute transactions by interleaving database requests with application logic. 
To enforce application-level policies in the presence of Byzantine clients, \sys{} requires transaction writes to be {\em endorsed} by other clients that independently validate the transaction's execution. 

Endorsements provide cryptographic evidence that a transaction's externally visible effects are consistent with the application specification: they attest to the correctness of the transaction's writes and of the database reads that influence them. 
Reads that do not affect any writes produce no external effect and require no endorsements. 

The endorsement collection protocol has two goals.
First, a Byzantine initiating client must not be able to fabricate, reuse, or manipulate endorsements to commit an invalid transaction. 
Second, endorsement and validation overheads should be minimal.

\sys{} organizes endorsement collection around the phases of transaction execution: beginning a transaction, issuing database requests, executing application logic, and committing. 
Throughout execution, initiating and validation clients record the transaction's \emph{read set} (keys accessed) and \emph{write set} (keys modified).
These sets conservatively approximate the information flow of the transaction and are later used to validate endorsements and enforce policy constraints.

We use Figure~\ref{fig:endorse-collect-fail} as a running example. 
When the initiating client diverges from correct execution, correct validation clients compute non-matching endorsements, causing replicas to abort the transaction at commit.

\myhdr{\texttt{Begin}{\bf: Initiating client}} When starting a transaction, the initiating client $C$ estimates the set of validation clients $V = \{C_V\}$ from which it must obtain endorsements. 
This estimate is based on the anticipated write set of the transaction and the policies associated with those keys, potentially using application-provided hints. 
Because the write set of an interactive transaction may not be fully known in advance, this estimate is best-effort; we assume for now that $V$ is sufficient and defer how to dynamically discover and adapt $V$ to Section~\ref{subsec:design-practical}.

\protocol{\textbf{C} $\to$ \textbf{C\textsubscript{V}}: 
Client $C$ sends begin validation requests.}
To initiate validation, $C$ sends a request $\beginval{} := \langle S_T,\ \txid{} \rangle$ to each $C_V \in V$, where $S_T$ describes the transaction's initial state ({\em e.g.}, name and input parameters) and $\txid{}$ is a unique identifier used to prevent endorsement reuse.
In Figure~\ref{fig:endorse-collect-fail}, the initiating client seeks validation for a \texttt{Payment} transaction from `Alice' to warehouse `W1' for \$50.

\protocol{\textbf{C\textsubscript{V}} $\leftarrow$ \textbf{C}: Validation clients begin execution.}
\myhdr{\texttt{Begin}{\bf: Validation clients}} Upon receiving a \beginval{} message, each validation client $C_V$ begins local execution of the specified transaction using the provided inputs.
Validation clients execute application logic eagerly but block on database requests until results are forwarded by the initiating client.


\begin{figure*}[t]
    \centering
    \includegraphics[width=\textwidth]{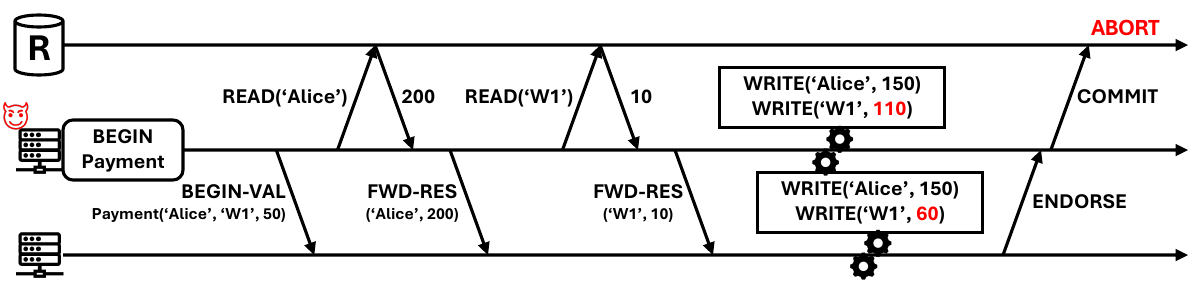}
    \caption{
        Invalid transactions from Byzantine clients fail to gather matching endorsements, and abort at correct replicas.
    }
    \label{fig:endorse-collect-fail}
\end{figure*}

\myhdr{\texttt{Database Request}} Validation clients do not issue database requests to the backend. 
Instead, the initiating client forwards the result of each database request along with a cryptographic proof produced by the underlying BFT system. 
This design serves two purposes. 
First, it avoids increasing load on the backend: redundant work is performed only at the clients, which can be scaled independently. 
Second, it ensures consistent execution across clients, as, without additional server-side support, independent reads issued by validation clients could return different results: executions could diverge even when all parties are correct.


Because validation clients do not trust the initiating client, forwarded results must be verifiable. 
\sys{} leverages the fact that BFT database systems already return replies accompanied by proofs of correctness ({\em e.g.}, a quorum of matching responses with signatures). 
To prevent equivocation, each forwarded result is also bound to a unique request identifier.

\protocol{\textbf{C} $\to$ \textbf{C\textsubscript{V}}: Client $C$ forwards database request results.}
\myhdr{Initiating client}
For each database request issued during transaction execution, the initiating client forwards a result message
$\fwdres{} := \langle \reqid{},\ \reqres{},\ \reqproof{} \rangle$
to all validation clients. 
Here, $\reqid{}$ is a cryptographic hash of the request and transaction identifier, $\reqres{}$ is the database response, and $\reqproof{}$ is the accompanying proof of correctness.
The response also includes the request's read and write sets, allowing validation clients to track how database state influences the transaction.

\protocol{\textbf{C\textsubscript{V}} $\leftarrow$ \textbf{C}: Validation clients receive forwarded results.}
\myhdr{Validation clients}
Upon receiving a forwarded result, a validation client verifies the proof and checks that it certifies both the response and its association with $\reqid{}$. 
If verification succeeds, the client resumes execution using the forwarded result; otherwise, it cannot produce a valid endorsement.

In Figure~\ref{fig:endorse-collect-fail}, the initiating client forwards database results for the customer and warehouse balance reads. The correct validation client verifies the proofs and uses these values during local execution.

\myhdr{\texttt{Execute}} 
Between database requests, the initiating client and all validation clients execute transaction logic locally. 
\sys{} assumes transaction execution is deterministic given identical inputs and database results; as a consequence, correct clients will reach the same intermediate states and final outcome. 
If the initiating client deviates from correct execution---for example, by computing an incorrect write based on valid reads---correct validation clients will independently compute a different transaction state. 

In Figure~\ref{fig:endorse-collect-fail}, the Byzantine initiating client computes an incorrect warehouse balance update, while the correct validation client, executing the same logic over the same reads, obtains a different value---a mismatch that will eventually cause correct replicas to abort the transaction.

Upon completing execution, each client also deterministically evaluates the transaction's \emph{lift function}.
The lift function takes as input the final transaction state (including the read and write sets) and the associated policies, and returns a set of keys to lift.
For each lifted key, its effective policy is raised to the maximum write-set policy for the transaction, excluding it from subsequent information-flow constraints.



\myhdr{\texttt{Commit}} 
After completing execution, the initiating client collects endorsements from its validation clients.

\protocol{\textbf{C\textsubscript{V}} $\to$ \textbf{C}: Validation clients send endorsement.}
\myhdr{Validation clients} 
Once a validation client $C_V$ finishes local execution, it computes a transaction hash $H_T$ over the transaction identifier, read set, and write set.
Any lifted keys are excluded from the read set prior to hashing.
The client then sends an endorsement $\finishval{} := \langle H_T \rangle_{\sigma_{C_V}}$ to the initiating client, signed using its public key. 
This endorsement attests to $C_V$'s view of the transaction's externally visible effects.

\protocol{\textbf{C} $\leftarrow$ \textbf{C\textsubscript{V}}: Client $C$ assembles endorsements.}
\myhdr{Initiating client}
$C$ independently computes $H_T$.
If it forwarded correct database results and executed the transaction correctly, its hash will match those of all correct validation clients. 
A Byzantine initiating client cannot alter the transaction's read set, write set, or lifted keys without causing an endorsement mismatch with correct clients.

Once $C$ has collected a sufficient number of matching, well-formed endorsements to satisfy the policies associated with all keys in the write set, it begins the transaction commit process.
We assume here that all validation clients reply correctly; we defer discussion of faulty validation clients to Section~\ref{subsec:design-practical}.

\protocol{\textbf{C} $\to$ \textbf{R}: Client $C$ begins commit process.}
To commit, $C$ sends an augmented commit message
{\small $\commit{} := \langle T,\ \txid{},\ \readset{T},\ \writeset{T},\ \sysEndorses{T} \rangle$} to the backend replicas. 
It contains all information required for replicas to validate endorsements and enforce application policies. 
In Figure~\ref{fig:endorse-collect-fail}, the validation client computes a hash that differs from the Byzantine initiating client's, which nonetheless attempts to commit the transaction attaching the mismatched endorsement.


\subsection{Transaction Commit}
\label{subsec:design-txn-commit}

\sys{} augments the underlying transaction commit protocol with additional policy-related checks, but otherwise leaves concurrency control (CC) and atomic commit unchanged.
Replicas vote to commit a transaction only if it satisfies both the backend's native CC checks and \sys{}'s policy enforcement.
When a replica receives a commit message
it performs two types of validation:
endorsement validation and information-flow checks.
A replica may vote to commit only if both succeed.

\myhdr{Endorsement Validation} 
A replica first computes the transaction hash $H_T$ from the commit message. 
It then verifies each endorsement's signature and checks that its hash matches $H_T$, collecting all valid endorsements that do so
(Algorithm~\ref{alg:endorse-val}, Lines~\ref{alg:endorse-val-line:for-endorse}--\ref{alg:endorse-val-line:endfor-endorse}).
For each key in the transaction's write set, the replica retrieves its policy and checks that the number of valid endorsements meets the policy requirement (Lines~\ref{alg:endorse-val-line:for-write}--\ref{alg:endorse-val-line:endfor-write}).
If any write's policy is not satisfied, the replica votes to abort.
In Figure~\ref{fig:endorse-collect-fail}, the transaction contains an endorsement whose hash does not match the replica-computed $H_T$, causing endorsement validation to fail and the transaction to abort.


\begin{algorithm}[t]
\caption{Endorsement Validation ($T$)}\label{alg:endorse-val}
\begin{algorithmic}[1]
\State $E_T = \{\}$
\For{$e \in \sysEndorses{T}$}\label{alg:endorse-val-line:for-endorse}
    \If{$\mathit{e.sig} \text{ is valid} \land \mathit{e.hash} == H_T$}
        \State $E_T = E_T \cup e$
    \EndIf
\EndFor\label{alg:endorse-val-line:endfor-endorse}
\For{$\mathit{key} \in \writeset{T}$}\label{alg:endorse-val-line:for-write}
    \State $p = \mathit{GetPolicy(key)}$
    \If{$|E_T| < p$}
        \State \Return $\False$
    \EndIf
\EndFor\label{alg:endorse-val-line:endfor-write}
\State \Return $\True$
\end{algorithmic}
\end{algorithm}

\begin{algorithm}[t]
\caption{Policy Information Flow ($T$)}\label{alg:policy-leak}
\begin{algorithmic}[1]
\State $P_T = \max(\{\mathit{GetPolicy(key)}: \ \mathit{key} \in \writeset{T}\})$
\For{$\mathit{key} \in \readset{T}$}
    \State $p = \mathit{GetPolicy(key)}$\label{alg:policy-leak-line:get-read-policy}
    \If{$p < P_T$}\label{alg:policy-leak-line:if-implies}
        \State \Return $\False$
    \EndIf\label{alg:policy-leak-line:endif-implies}
\EndFor
\State \Return $\True$
\end{algorithmic}
\end{algorithm}

\myhdr{Policy Information Flow}
Endorsement validation ensures that transaction writes are explicitly approved by a sufficient number of clients, but it does not by itself prevent a transaction from indirectly circumventing policies. 
In particular, when objects have heterogeneous policies, a transaction could use reads protected by weaker policies to influence writes protected by stronger ones, effectively committing a high-policy write with insufficient endorsement.
To prevent such policy violations, \sys{} enforces a conservative information-flow check at replicas. 
Rather than requiring clients to track fine-grained data dependencies during execution, replicas approximate information flow using the transaction's read and write sets. 
For a transaction $T$, let $P_T$ denote the maximum policy among all keys in its write set.
The replica verifies that every key in the read set has policy at least $P_T$ (Algorithm~\ref{alg:policy-leak}). 
Keys explicitly lifted by the transaction's lift function are excluded from the read set for this check, as lifting conceptually raises their effective policy to $P_T$. 
If any non-lifted read violates this constraint, the replica votes to abort the transaction.

\subsection{Governance Transactions}
\label{subsec:design-gov-txn}
\sys{} supports dynamic updates to application policies at runtime.
Doing so requires careful coordination between policy evolution and transaction validation to ensure that transactions are evaluated against the correct policy versions and that policy changes cannot be bypassed. 
To this end, \sys{} represents policies as versioned data maintained by the database and updated through transactions. 
We refer to transactions that modify policy values as {\em governance transactions}.

\myhdr{Versioned Policy Store} 
\sys{} introduces a layer of indirection between database keys and concrete policies. 
Each database key is associated with a {\em policy id}, and each policy id corresponds to a versioned policy value. 
Each replica maintains a {\em versioned policy store}, implemented as ordinary database state ({\em e.g.}, a dedicated table). 
Policy ids may be assigned at arbitrary granularity: a single key, a table, or any application-defined grouping may map to the same policy id. 
This indirection allows policies to evolve independently of application data.
Conceptually, both the policy values themselves and the mapping can be updated.
Our prototype implementation, however, currently assumes fixed mappings, so we discuss only updates to policy values for simplicity. 

\myhdr{Policy Versions and Validation}
Versioning affects both the view clients may have of policies and the checks performed by replicas. 
We defer discussion of stale client views to Section~\ref{subsec:design-practical} and focus here on replica-side validation.
Replica checks use policy versions consistent with the transaction's execution. 
For a transaction $T$ with timestamp $ts_T$, writes are evaluated against the latest policy version with timestamp earlier than $ts_T$. 
For information-flow checks, each read is evaluated against the policy version that was current when the read value was produced. 

\myhdr{Transaction Processing and Isolation} 
Governance transactions must be committed by all correct replicas and respect database isolation. 
In particular, \sys{} requires governance transactions to be serializable with respect to regular transactions, ensuring that committed writes cannot be invalidated retroactively by later policy changes. 
Conceptually, governance transactions are ordinary transactions over policy objects. 
The key additional mechanism is that regular transactions {\em implicitly read} the policy ids associated with their writes.  
These implicit reads capture the transaction's dependency on specific policy versions and allow standard concurrency control to detect conflicts with governance transactions. 

Algorithm~\ref{alg:gov-txn-cc} illustrates an instantiation using optimistic concurrency control (OCC). 
For a regular transaction $T$, the replica records an implicit read of the policy id (and version) associated with each key in $T$'s write set
(Lines~\ref{alg:gov-txn-cc-line:for-add-implicit}--\ref{alg:gov-txn-cc-line:endfor-add-implicit}).
During validation, $T$ aborts if a governance transaction committed a newer version of any such policy between $T$'s read and commit timestamps
(Lines~\ref{alg:gov-txn-cc-line:for-regular-vs-gov}--\ref{alg:gov-txn-cc-line:endfor-regular-vs-gov}).

Conversely, a governance transaction $G$ aborts if it updates a policy id that was implicitly read by a concurrently committed regular transaction
(Lines~\ref{alg:gov-txn-cc-line:for-check-implicit}--%
\ref{alg:gov-txn-cc-line:endfor-check-implicit}).
This bidirectional conflict detection ensures serializability between policy updates and policy-dependent transactions.

To prevent clients from committing transactions under obsolete policies,
\sys{} additionally requires that a regular transaction is validated against the latest policy version available at its commit timestamp
(Lines~\ref{alg:gov-txn-cc-line:if-latest}--\ref{alg:gov-txn-cc-line:endif-latest}).
This ensures that policy changes take effect immediately in the serialization order and cannot be circumvented by reordering.

\myhdr{Endorsements and Governance} By default, governance transactions are subject to the endorsement requirements of the policies they modify.
Applications may also define alternative endorsement rules for governance transactions ({\em e.g.}, requiring endorsement from a majority of all clients).
Governance transactions are exempt from information-flow checks, as their effects are limited to policy state rather than application data.

\begin{algorithm}[t]
\caption{Governance Transaction OCC}\label{alg:gov-txn-cc}
\begin{algorithmic}[1]
\State // Regular transaction $T$ 
\For{$\mathit{key} \in \writeset{T}$}\label{alg:gov-txn-cc-line:for-add-implicit}
    \State $p, \ \mathit{ts_p} = \mathit{GetPolicy(key, \ ts_T)}$
    \If{$\lnot$ ($ts_p$ is the latest version of $p$)}\label{alg:gov-txn-cc-line:if-latest}
        \State \Return Abort
    \EndIf\label{alg:gov-txn-cc-line:endif-latest}
    \State $\mathit{ImpPolicyReads_T.add((p, \ ts_p))}$
\EndFor\label{alg:gov-txn-cc-line:endfor-add-implicit}
\State
\State // Standard OCC for $T$
\For{$\mathit{(p, \ ts_p) \in ImpPolicyReads_T}$}\label{alg:gov-txn-cc-line:for-regular-vs-gov}
    \If{$\exists G \in \mathit{Committed}: p \in \writeset{G} \land \mathit{ts_p < ts_G < ts_T}$}
        \State \Return Abort
    \EndIf
\EndFor\label{alg:gov-txn-cc-line:endfor-regular-vs-gov}
\State
\State // Standard OCC for governance transaction $G$
\For{$p \in \writeset{G}$}\label{alg:gov-txn-cc-line:for-check-implicit}
    \If{\parbox[t]{0.9\linewidth}{
        $\exists T \in \mathit{Committed: (p, \ ts_p) \in ImpPolicyReads_T}$ \\
        $ \land \mathit{ts_p < ts_G < ts_T}$
    }}
        \State \Return Abort
    \EndIf
\EndFor\label{alg:gov-txn-cc-line:endfor-check-implicit}
\end{algorithmic}
\end{algorithm}

\subsection{Practical Considerations}
\label{subsec:design-practical}




\myhdr{Selecting Validation Clients}
When beginning a transaction, an initiating client selects a set of validation clients $V$ to collect endorsements from. As long as the selected set can pass endorsement validation, the identities of the selected clients are immaterial. Determining $V$'s size, however, is not straightforward. 
Ideally, $|V|$ equals the maximum policy associated with any key in the transaction's write set. 
In practice, however, selecting an appropriate $V$ can be complicated by a limited knowledge of the write set, stale policy information, and by faulty validators. 

\one{} \textit{Write-set discovery.}
For interactive transactions, the full write set may not be known at the start of execution. 
If the initiating client later modifies a key whose policy exceeds $|V|$, it must contact additional validation clients and wait on them to re-execute the transaction. 
In the worst case, this incurs the overhead of an additional execution pass.
Applications can reduce this cost by providing application-specific estimation functions.
For example, if policies are assigned per table and the tables modified by a transaction are known in advance, the initiating client can determine $|V|$ precisely.

\two{} \textit{Stale policy views.}
Clients cache policy information to avoid fetching it on every transaction. 
After a governance transaction commits, this cache may become stale, causing a client to request too few or too many endorsements. 
If too few, the transaction aborts during commit; if too many, the client incurs unnecessary overhead. 
Replicas can return updated policy information as a response. 
Such information must be accompanied by a proof of commit to prevent Byzantine replicas from supplying incorrect policies. 
We assume policy updates are infrequent, limiting the cost of cache refreshes.

\three{} \textit{Faulty validation clients.}
The selected validation set $V$ may include faulty clients that fail to respond or return invalid endorsements. 
If $|V|$ is exactly equal to the required policy threshold, a single faulty validator can prevent progress.
When this occurs, the initiating client may contact additional validators and repeat endorsement collection. 
In practice, clients can track validator responsiveness and reliability over time, biasing future selections toward well-behaved peers. 
Clients may also deliberately overestimate $|V|$ to tolerate occasional failures.


\myhdr{Optimizations}
Several aspects of the \sys{} protocol can be parallelized to reduce latency.
Validation clients may verify forwarded database proofs asynchronously, provided verification completes before endorsement.
Similarly, initiating clients may defer endorsement verification, as replicas re-validate endorsements during commit. 
At replicas, endorsement validation and information-flow checks can be performed in parallel with concurrency control. 
A replica votes to commit only if all checks succeed. 
These optimizations reduce end-to-end latency without weakening safety. 

Finally, rather than allowing clients to choose endorsers independently, applications can enforce a deterministic random selection of endorsers to prevent Byzantine clients from populating their validation sets with other Byzantine clients. Such packing poses no safety risk because validation thresholds are chosen to resist collusion. However, deterministic random selection may enable lower policy thresholds that nonetheless, in expectation, yield safety with high probability. This technique, however, does not reduce the threshold requirements needed for {\em provable} safety.

\section{Evaluation}

Our evaluation addresses the following questions:
\begin{itemize}
    \item What is the overhead \sys{} adds on realistic applications?~(\S\ref{subsec:eval-macro})
    \item How does policy affect \sys{}'s performance?~(\S\ref{subsec:eval-vary-policy})
    \item How responsive is \sys{}'s policy scheme to changes?~(\S\ref{subsec:eval-gov-txn})
    \item What residual impact can Byzantine clients have on \sys{}?~(\S\ref{subsec:eval-byz-clients})
    \item How can lifting be leveraged to enable heterogeneous policies in \sys{} and improve performance?~(\S\ref{subsec:eval-lifting})
    \item How does dynamic write-set discovery impact \sys{}'s performance?~(\S\ref{subsec:eval-ws-discovery})
\end{itemize}

\begin{figure*}[t]
    \begin{minipage}{0.32\textwidth}
        \centering
        \includegraphics[width=\textwidth]{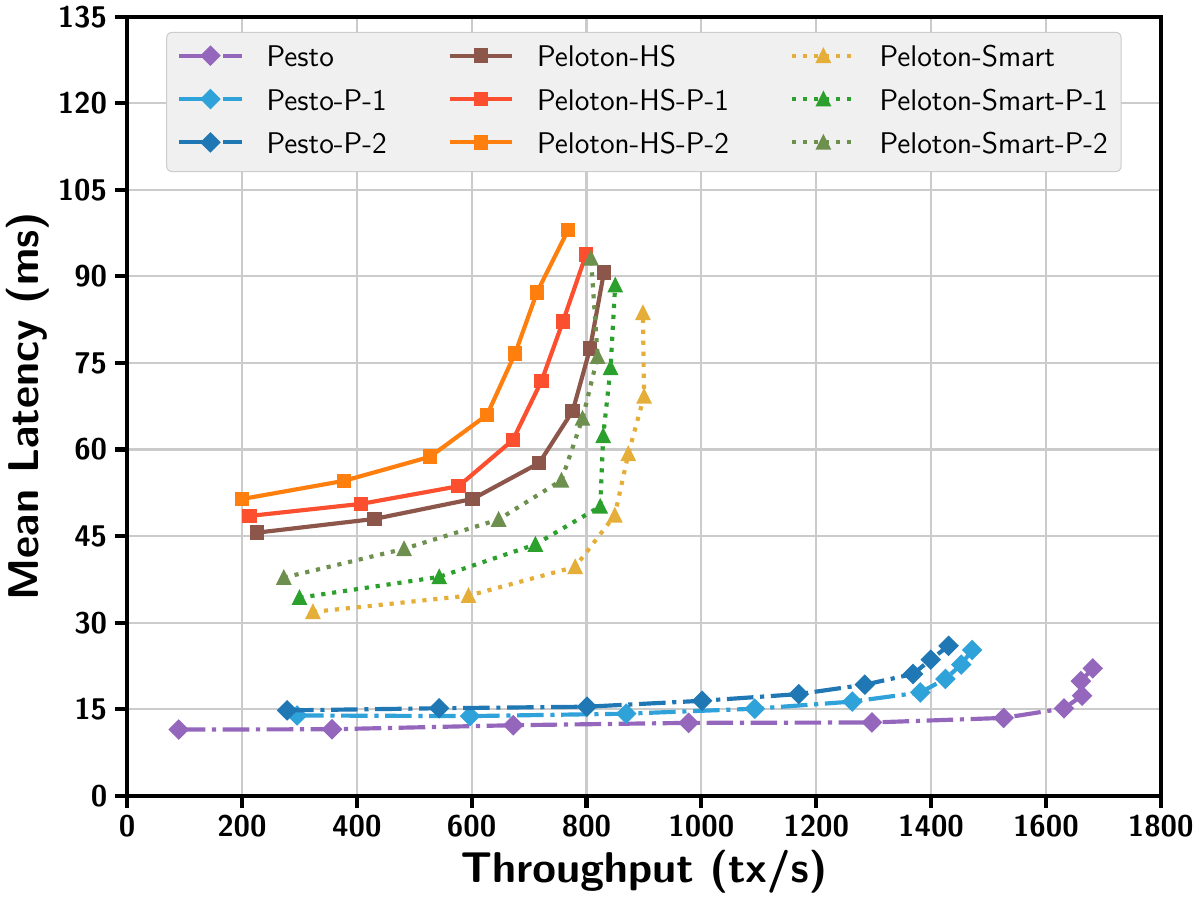}
        \captionof{figure}{\tpcc{} (20 warehouses)}
        \label{fig:tpcc-sql}
    \end{minipage}
    \hfill
    \begin{minipage}{0.32\textwidth}
        \centering
        \includegraphics[width=\textwidth]{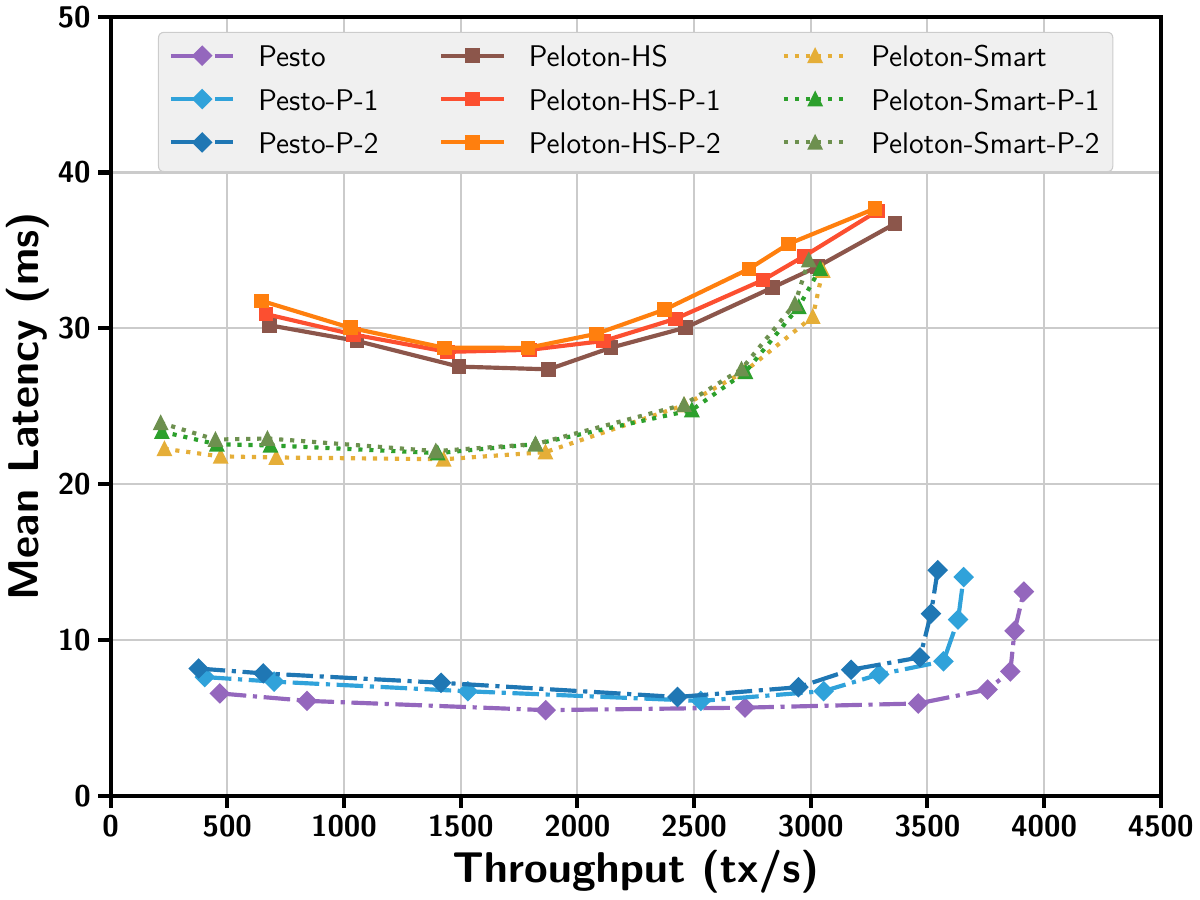}
        \captionof{figure}{\seats{}}
        \label{fig:seats}
    \end{minipage}
    \hfill
    \begin{minipage}{0.32\textwidth}
        \centering
        \includegraphics[width=\textwidth]{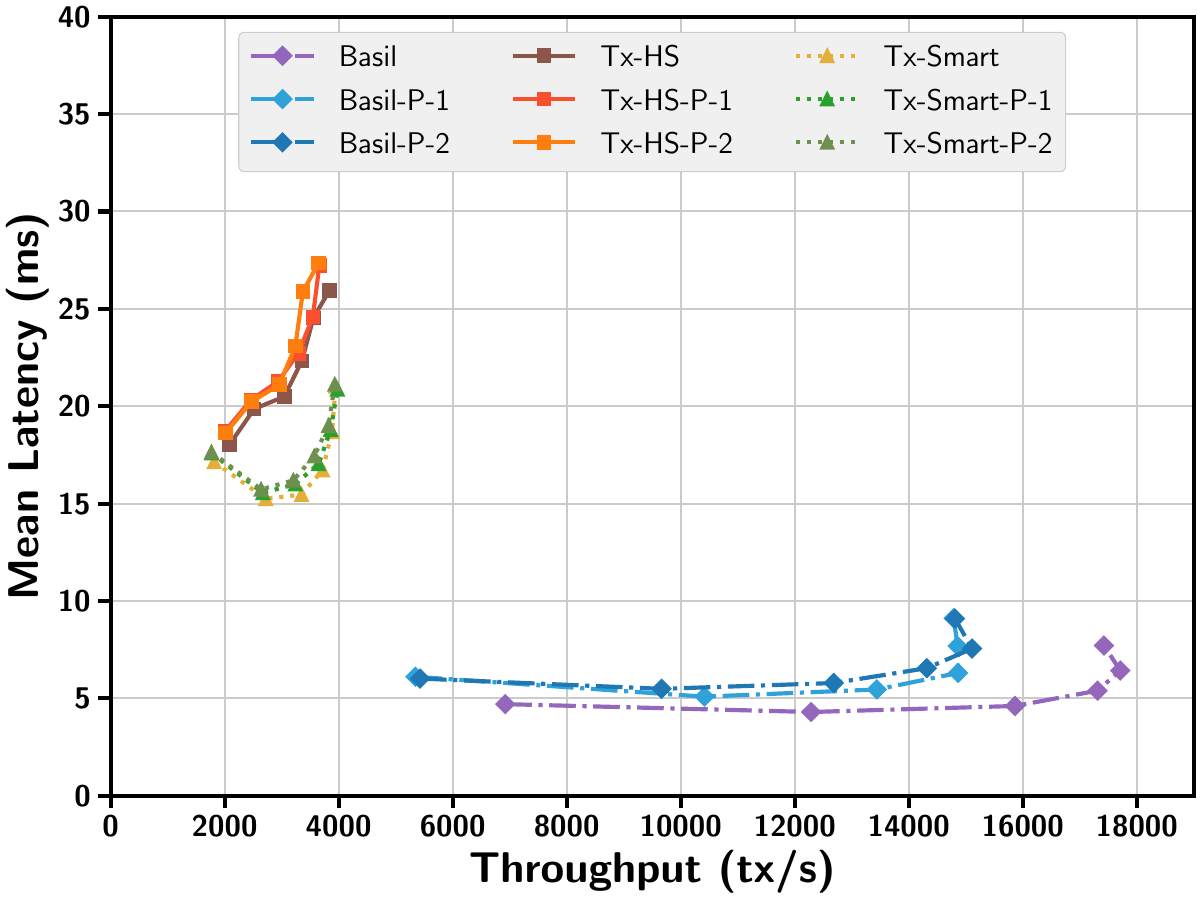}
        \captionof{figure}{\smallbank{}}
        \label{fig:smallbank-kv}
    \end{minipage}
\end{figure*}

\myhdr{Implementation}
We implement \sys{} in C++ atop several transactional BFT systems.
Networking uses Protobuf~\cite{protobuf} over TCP; authentication relies on ed25519 signatures~\cite{ed25519, ed25519-donna} and HMAC-SHA256~\cite{hmac}.
We use Blake3~\cite{blake3} for hashing.

\myhdr{Baselines}
We integrate \sys{} with four systems:\footnote{Code and experimental scripts are available at \url{https://github.com/ATLi2001/Sintr-Artifact}.}

\one{} Basil~\cite{basil}, a sharded BFT key-value store (KVS) with interactive transactions;

\two{} Pesto~\cite{pesto} a BFT SQL database that extends Basil with Peloton's~\cite{peloton} query engine;

\three{} Peloton-SMR, which executes SQL queries at replicas after ordering them via BFT SMR. 
We take the Peloton-SMR implementation from the open source Pesto artifact, which includes two versions: Peloton-HS uses HotStuff~\cite{hotstuff} for consensus; Peloton-Smart instead relies on \bftsmart{}~\cite{bftsmart,bftsmart-github}.

\four{} Tx-SMR, a transactional KVS system layered over BFT SMR. 
We take the Tx-SMR implementation from the open source Basil artifact, which also includes two versions: Tx-HS uses HotStuff as its consensus engine, while Tx-Smart relies on \bftsmart{}.

\myhdr{Experimental Setup}
Experiments run on \texttt{m510} machines (8-core 2.0 GHz CPU, 64 GB RAM, 10 GB NIC, 0.15 ms ping latency) on CloudLab~\cite{cloudlab}.
Clients execute transactions in a closed-loop using a single thread, and retry aborts with exponential back-off.
\sys{} allocates three additional cores per client for validation; these cores do not accelerate transaction execution, which is largely sequential and bottlenecked on asynchronous database calls.
Unless stated otherwise, we assume that \sys{} clients have static knowledge of the required endorsements for all transactions; in Section~\ref{subsec:eval-ws-discovery} we explicitly evaluate the impact of imperfect predictions and dynamic policy discovery.

Each system is configured to tolerate $f=1$ faults ($n=3f+1$ replicas for HotStuff/\bftsmart{}, $n=5f+1$ for Basil/Pesto).
All systems run in-memory and enforce serializable isolation.
Experiments run for 60~s total, including a 15~s warm-up and a 15~s cool-down.

\subsection{High Level Performance}
\label{subsec:eval-macro}

We evaluate \sys{} on three popular transactional benchmarks: 
\tpcc{}~\cite{tpcc}, \seats{}~\cite{oltp-bench}, and \smallbank{}~\cite{oltp-bench}. 
SQL systems run \tpcc{} (20 warehouses, high contention) and \seats{} (lower contention, many read-only queries), while KVS systems run \smallbank{} (1M accounts, 90\% skew to 1{,}000 hot keys).

In each experiment, all objects in the database are assigned the same integrity policy, which remains constant throughout the experiment. 
We denote by \textit{P-x} that a transaction requires $x$ endorsements, excluding the initiating client. 
Baselines correspond to P-0.
Initiating clients select validation clients uniformly in a round-robin manner.

Figures~\ref{fig:tpcc-sql}, \ref{fig:seats}, and \ref{fig:smallbank-kv} report throughput and latency.

\myhdr{\tpcc{}}
Pesto with P-1 and P-2 achieves peak throughput (1471 tx/s and 1431 tx/s, respectively) within 12.5\% and 14.9\% of base Pesto (1681 tx/s).
Latency increases by 16.8\% and 21.7\% at the respective inflection points. 
Peloton-SMR incurs lower overhead: compared with the baseline, throughput with P-1 drops by 6.4\% (Peloton-HS) and 3.0\% (Peloton-Smart), while latency increases, respectively, by 6.9\% and 3.1\%. 

The overheads of \sys{} stem primarily from the additional cryptographic operations required by its protocol. 
At replicas, \sys{} incurs the cost of verifying endorsement signatures, while at clients it adds HMAC computation for forwarded requests and signature verification for forwarded proofs. 
Most validation latency is hidden by \sys{}'s pipelined design and by clients' precise knowledge of the number of validators required. 
As a result, \sys{}'s overhead is more pronounced under Pesto, which is CPU-bound and therefore sensitive to the added signature verification costs. 
In contrast, Peloton-SMR systems are dominated by contention induced by higher baseline latency, making the additional overhead of \sys{} relatively smaller. 

\begin{figure*}[t]
    \begin{minipage}{0.66\textwidth}
        \centering
        \subfloat[Uniform workload\label{fig:vary-policy-u}]{\includegraphics[width=0.48\textwidth]{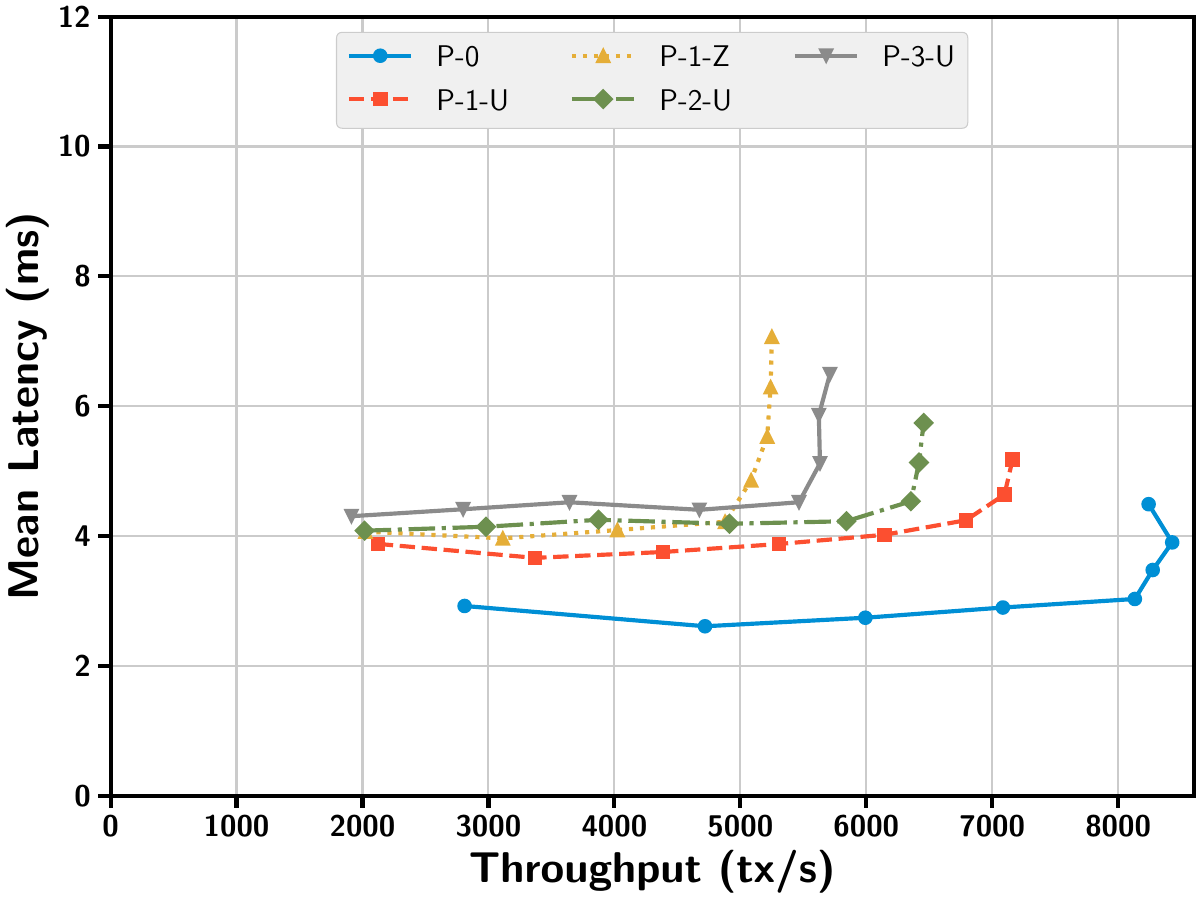}}
        \hfill
        \subfloat[Zipfian workload\label{fig:vary-policy-z}]{\includegraphics[width=0.48\textwidth]{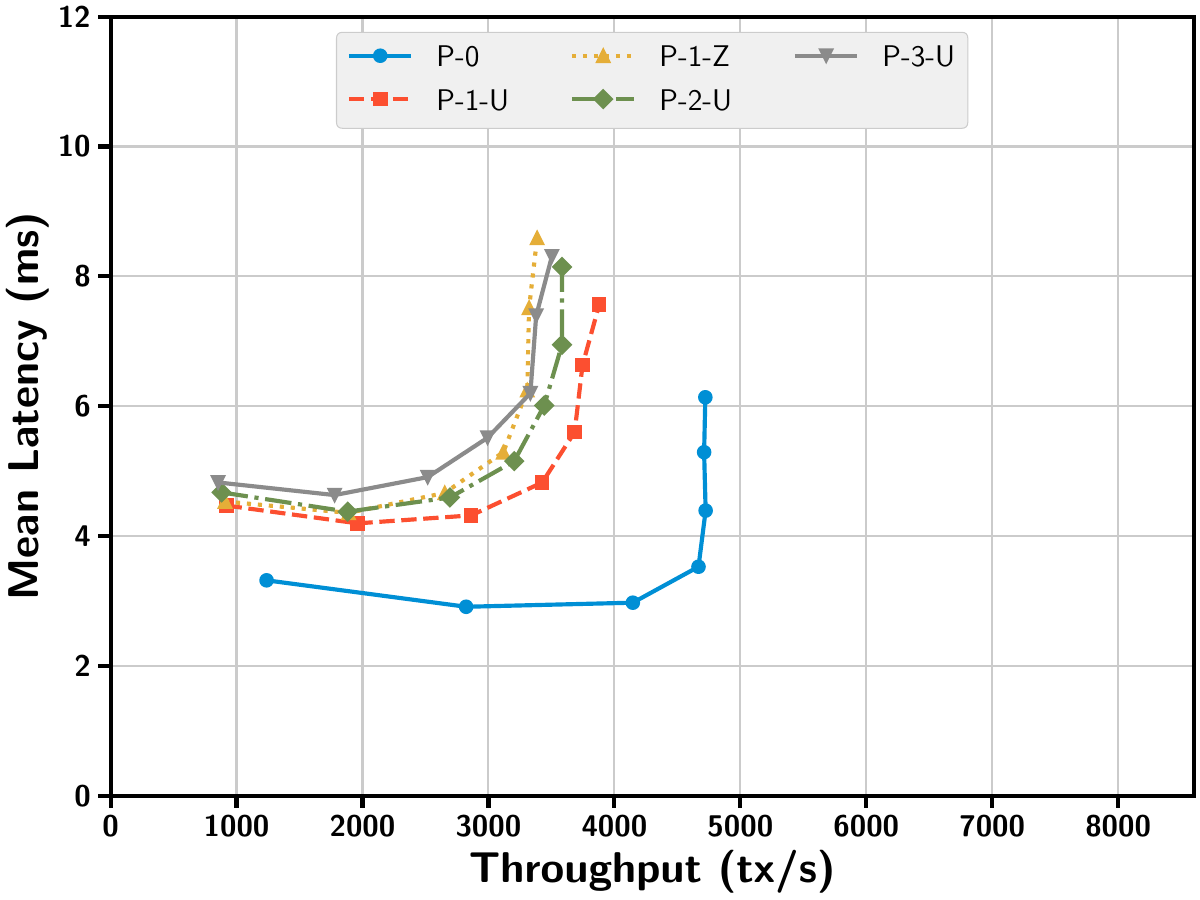}}
        \caption{Impact of application policy}
        \label{fig:vary-policy}
    \end{minipage}
    \hfill
    \begin{minipage}{0.32\textwidth}
        \centering
        \includegraphics[width=\textwidth]{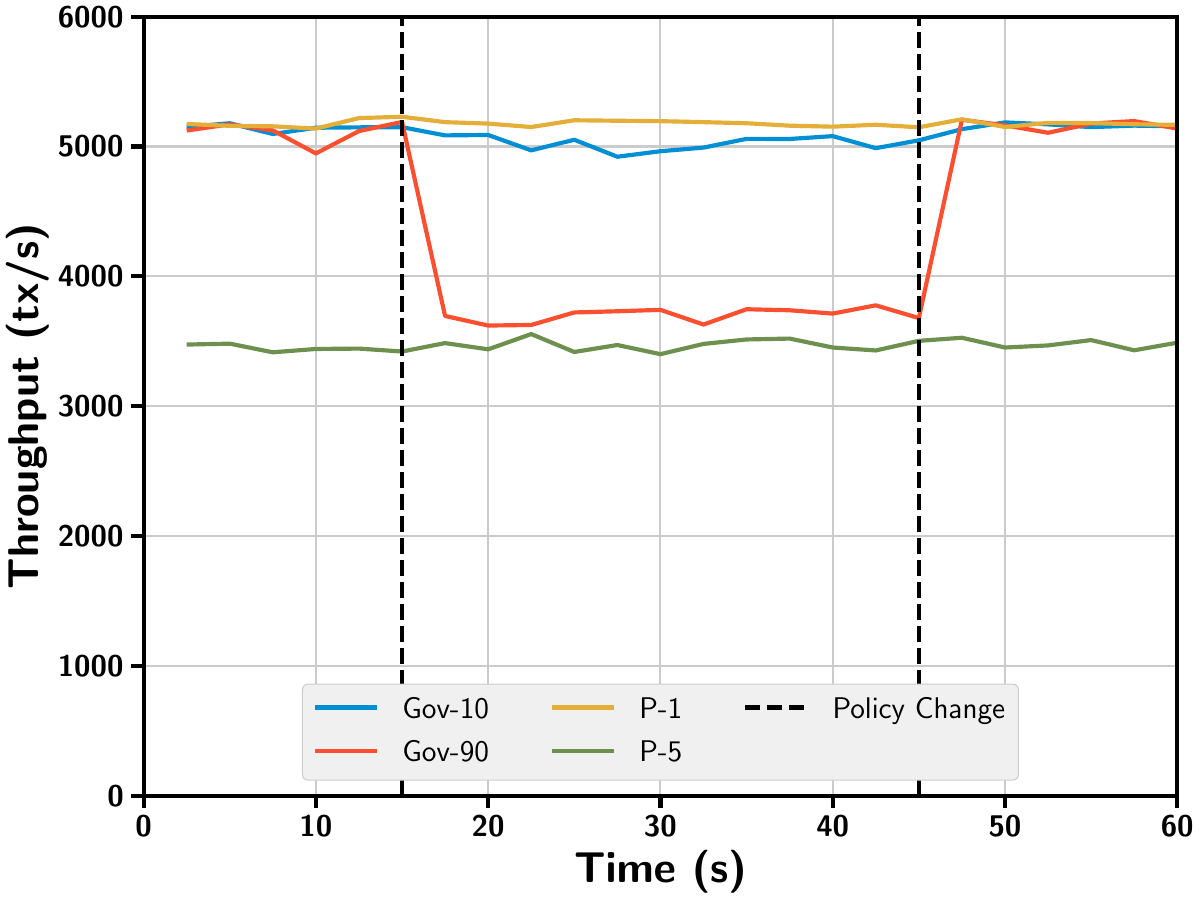}
        \caption{Dynamically changing policies}
        \label{fig:gov-txn}
    \end{minipage}
\end{figure*}

\myhdr{\seats{}}
Pesto with P-1 and P-2 achieve throughputs (3569, 3468 tx/s) that are within 7.4\% and 10.0\% of base Pesto (3855 tx/s), respectively.
The latencies are 8.2\% and 11.3\% higher than base Pesto, respectively.
Compared to \tpcc{}, \seats{} contains a significantly higher portion of read-only transactions, which don't incur \sys{}'s overheads.
All systems are CPU bottlenecked, but \sys{} adds little overhead to the Peloton-SMR systems as the latency of replica endorsement checks is relatively low compared to that of consensus.

\myhdr{\smallbank{}}
Basil under P-1 and P-2 achieves similar throughput, remaining within 16.1\% of base Basil. 
However, these policies increase latency by 16.8\% and 21.4\%, respectively. 
Tx-HS under P-1 experiences a smaller impact, with a 4.6\% reduction in throughput and a 4.9\% increase in latency, while \sys{} adds negligible overhead to the base Tx-Smart system. 
As in \tpcc{}, Basil is CPU-bound, whereas Tx-SMR systems are contention-bound, amplifying \sys{}'s overhead on Basil relative to Tx-SMR.


\myhdr{Takeaway}
In SQL and KVS workloads, \sys{} incurs modest overhead, with higher costs primarily in CPU-bound systems.

\subsection{Impact of Application Policy}
\label{subsec:eval-vary-policy}

We study how performance scales with policy strength using a YCSB-based microbenchmark (10 tables, 1M keys each; transactions read and update five rows) using Pesto as a baseline. 
We consider two workloads: an uncontended uniform access pattern and a contended Zipfian access pattern with coefficient 0.99.
For each workload, we instantiate \sys{} under a family of policies denoted \textit{P-x-$[U/Z]$}, where all keys share a static policy, $x$ is the number of required client endorsements, and $U$ and $Z$ indicate uniform and Zipfian distributions of validation load, respectively. 
The Zipfian case models scenarios where certain clients are preferred as validators, {\em e.g.}, because of reputation or proximity. 

Under the uniform workload (Fig.~\ref{fig:vary-policy-u}), \sys{} with P-1-U incurs a 15\% throughput loss and a 32.6\% latency increase relative to P-0. 
Because transactions are short, \sys{}'s overhead is proportionally large, and at high load servers become CPU-bound. 
Strengthening the policy further requires additional signature validations per transaction, reducing throughput by a steady 8--9\% per added endorsement. 
With P-1-Z, skewed validator selection causes a small subset of clients to become CPU-bound on validation.

Under the Zipfian workload (Fig.~\ref{fig:vary-policy-z}), all configurations are contention-bound. 
The added latency from endorsement checks slightly increases transaction conflict windows and abort rates, leading P-1-U to incur a 20.6\% throughput reduction and a 36.8\% latency increase. 
However, because servers are no longer CPU-bound, the marginal cost of additional endorsement checks is small. 
Skewed validation performs only slightly worse, as contention caps throughput well below client validation capacity.

\myhdr{Takeaway}  
\sys{}'s overhead is primarily determined by where the system bottlenecks: it is more pronounced in CPU-bound settings with short transactions, but modest when contention dominates.
Even under skewed or stronger policies, endorsement costs scale predictably.

\subsection{Dynamically Changing Policies}
\label{subsec:eval-gov-txn}

We evaluate dynamic policy updates in \sys{} using a YCSB-based microbenchmark with a uniform access pattern. 
The experiment runs for 90~s with 20 clients, including 15~s warm-up and cool-down periods. 
Initially, all keys are assigned P-1. 
We then inject a governance transaction that upgrades $x\%$ of keys (Gov-$x$) from P-1 to P-5, followed 30~s later by a second governance transaction that reverts the change. 
For reference, we also report steady-state performance with all keys fixed at P-1 or P-5.

Figure~\ref{fig:gov-txn} illustrates the effect of policy updates. 
Governance transactions commit in under 5~ms, on par with regular transactions. 
When the first update commits, each client aborts at most one in-flight transaction due to concurrency control or insufficient endorsements. 
After the second update, clients temporarily generate excess endorsements, but detect and correct this on their next transaction. 
Throughout, transaction processing never halts, and throughput shifts immediately to the level dictated by the active policy.

\myhdr{Takeaway} 
\sys{} supports fast, minimally disruptive policy changes at runtime. 
They take effect immediately, with throughput adapting smoothly and without pausing execution.

\begin{figure*}[t]
    \begin{minipage}{0.66\textwidth}
        \centering
        \subfloat[Uniform workload\label{fig:byz-clients-u}]{\includegraphics[width=0.48\textwidth]{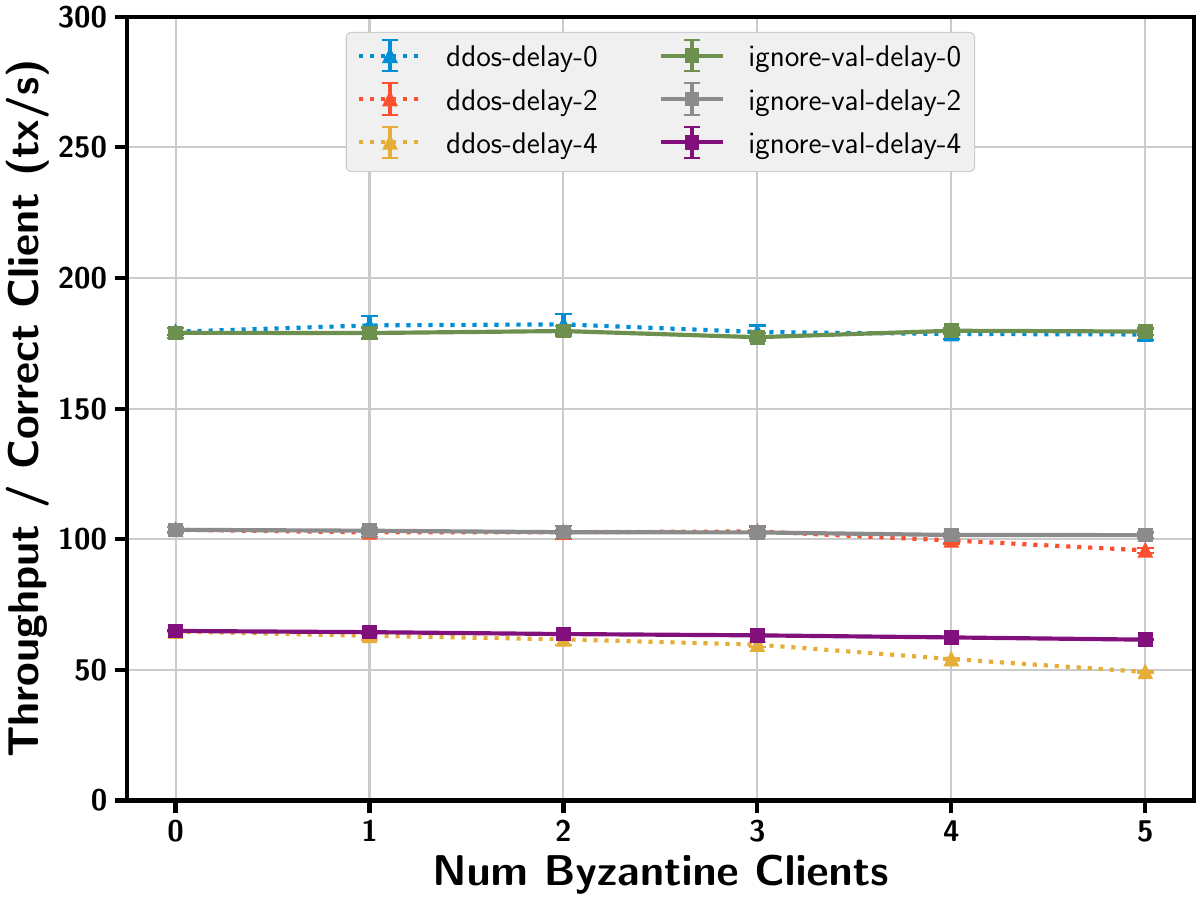}}
        \hfill
        \subfloat[Zipfian workload\label{fig:byz-clients-z}]{\includegraphics[width=0.48\textwidth]{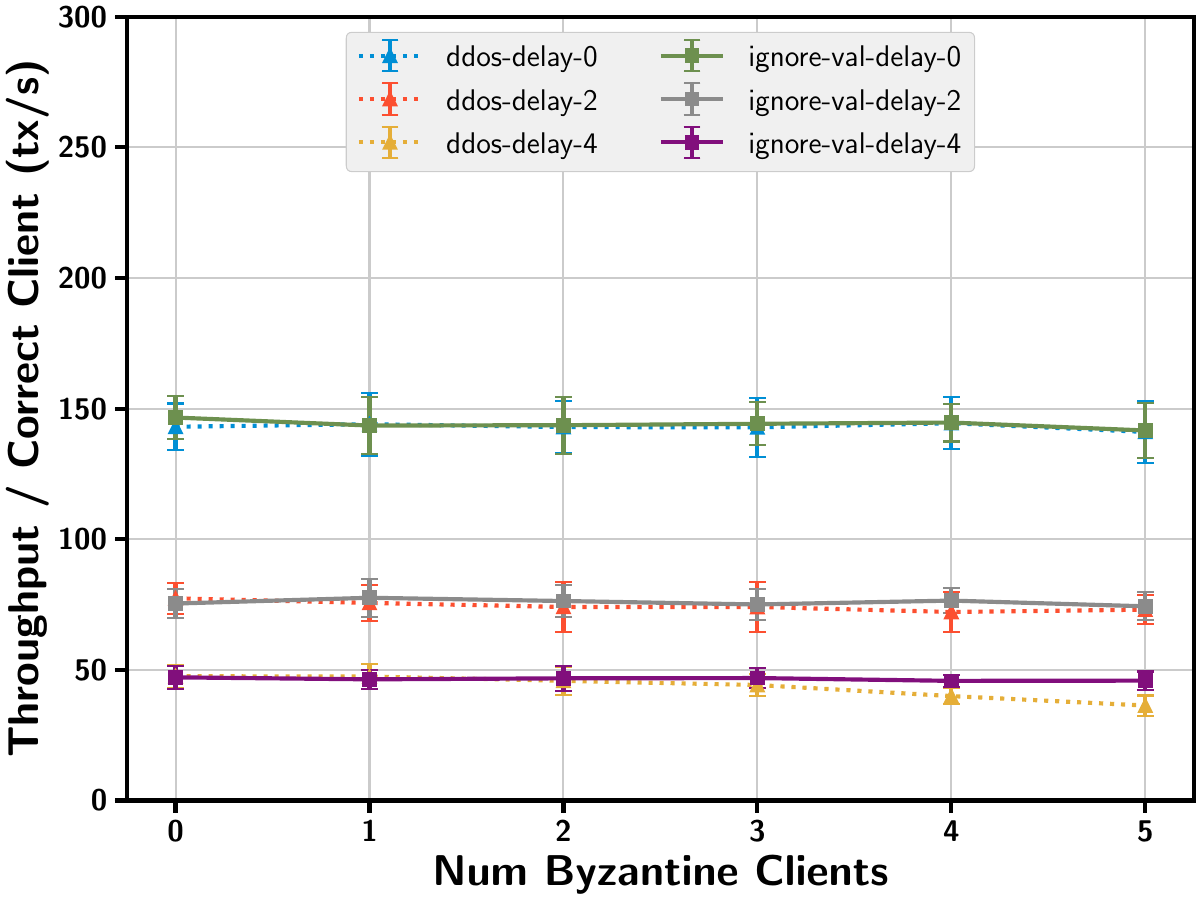}}
        \caption{Residual impact of Byzantine clients}
        \label{fig:byz-clients}
    \end{minipage}
    \hfill
    \begin{minipage}{0.32\textwidth}
        \centering
        \includegraphics[width=\textwidth]{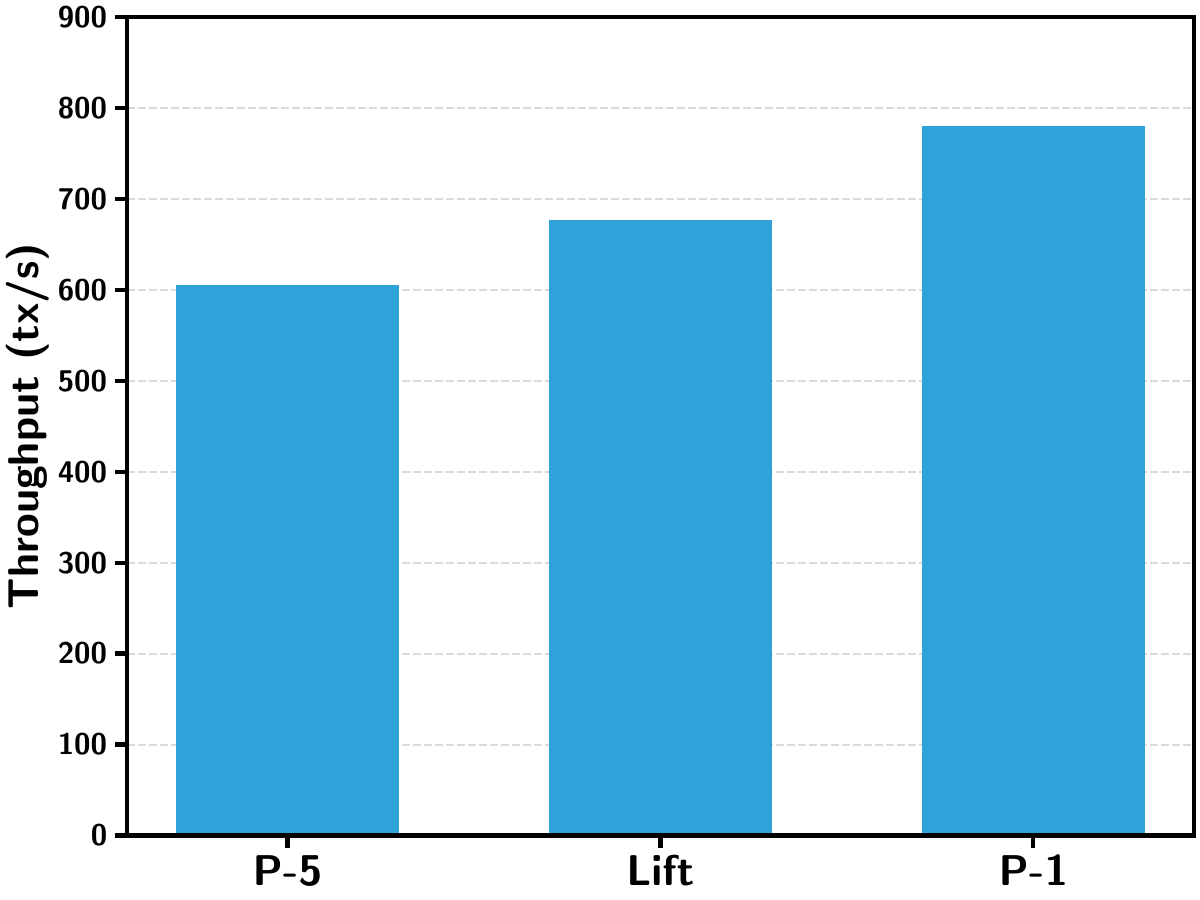}
        \caption{Leveraging policy lifting}
        \label{fig:lifting-evaluation}
    \end{minipage}
\end{figure*}

\subsection{Residual Impact of Byzantine Clients}
\label{subsec:eval-byz-clients}

\sys{}'s policy-based design ensures that, under a well-chosen policy, Byzantine clients cannot compromise execution correctness. 
Nevertheless, Byzantine clients may still degrade performance by deviating from the endorsement protocol---for example, by ignoring validation requests or by soliciting excessive endorsements for their own transactions. 
Such behavior is indistinguishable from benign behavior: an unresponsive client may simply pose as slow, and correct clients cannot tell whether their endorsement was strictly necessary.

We study two representative attacks: \textit{ignore-val}, where Byzantine clients ignore validation requests, and \textit{ddos}, where they request endorsements from all clients. 
Other denial-of-service behaviors ({\em e.g.}, flooding the system with arbitrary validation requests) are mitigated by bounding the number of outstanding transactions per client.

Our evaluation uses a YCSB-based microbenchmark with uniform and Zipfian access patterns. 
We deploy 20 clients and configure the system to tolerate up to five Byzantine clients by assigning P-5 to all keys. 
Validation cost is varied using an artificial $x$~ms busy-wait (\textit{delay-x}). 
Correct clients optimistically contact five validators per transaction; in \textit{ignore-val}, ignored requests trigger contacting additional validators, both immediately and in subsequent transactions.

Figure~\ref{fig:byz-clients} reports throughput of correct clients. 
Under \textit{ignore-val}, performance degradation is small (1--5\%) across all validation costs. 
Timeouts allow correct clients to identify unresponsive validators, requiring only 1--2 additional contacts on average, which validation clients can absorb without saturating.


In contrast, \textit{ddos} attacks grow more disruptive as validation becomes more expensive, ranging from negligible impact at \textit{delay-0} to a 25\% throughput reduction at \textit{delay-4}. 
When validation is cheap, correct clients can handle surplus requests; as validation cost grows, excess endorsement work delays transaction processing. 
Because this pressure is primarily client-side, the effect is similar under uniform and Zipfian workloads.


In practice, applications may restrict which peers clients contact for validation. 
For low-compute transactions, such policies can remain permissive, as \sys{} retains high throughput even under adversarial validation load.

\myhdr{Takeaway} 
Byzantine clients in \sys{} cannot compromise correctness and have limited performance impact unless validation is computationally expensive. 


\subsection{Leveraging Policy Lifting}
\label{subsec:eval-lifting}


\sys{} allows applications to define heterogeneous policies so they pay high safety costs only where needed. 
For example, a \tpcc{}-like application (Fig.~\ref{fig:app-policy}) manages high-integrity financial data alongside lower-integrity record-keeping data, and ideally enforces strong policies only on the former.

However, the \texttt{Delivery} transaction reads low-policy order data while updating high-policy financial state, violating safe information flow. 
To resolve this, \sys{} supports lifting, allowing transactions to promote low-policy objects after validation. 
In \texttt{Delivery}, a lift function checks orders against customer balances; upon success, the order data is lifted  and the transaction proceeds safely. 
This preserves heterogeneous policies without sacrificing correctness.



We evaluate lifting using a modified \tpcc{} workload in which \texttt{Delivery} executes a lift function that checks that order amounts do not exceed available credit. 
The experiment uses 20 clients, tolerating up to five Byzantine clients. 
Financial tables (warehouse, district, customer, history) are assigned P-5, while record-keeping tables (orders, stock, item) are assigned P-1. 
We compare this configuration against uniform P-5 and uniform P-1 deployments. 
The uniform P-5 setup incurs unnecessary overhead on record‑keeping data, while uniform P-1 accepts undue risk for financial data.

Figure~\ref{fig:lifting-evaluation} shows the resulting throughput. 
Lifting improves throughput by 11.6\% over uniform P-5.
Lifting results in marginally more transactions requiring five endorsements than one endorsement; it proportionally recovers 43\% of the throughput gained by the unsafe P-1 configuration while preserving P-5 safety for financial data. 

\myhdr{Takeaway} 
Lifting enables heterogeneous policies that substantially improve performance without sacrificing safety.

\subsection{Impact of Dynamic Write-set Discovery}
\label{subsec:eval-ws-discovery}

\begin{figure}[t!]
    \centering
    \includegraphics[width=0.67\columnwidth]{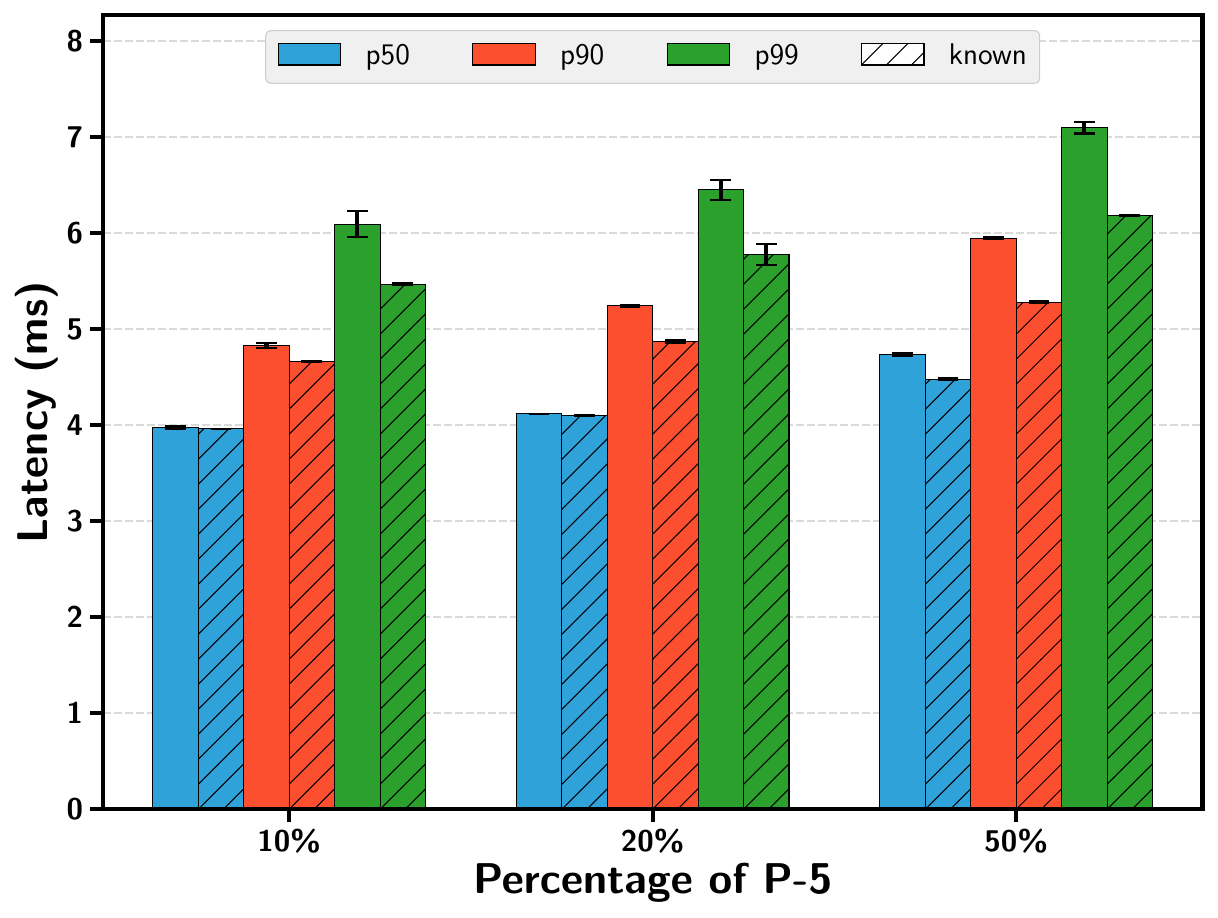}
    \caption{Dynamic write-set discovery}
    \label{fig:ws-discovery}
\end{figure}

\sys{} does not require advance knowledge of a transaction's write set and can dynamically infer during execution the size of the set $V$ of validation clients required by the transaction. If the client initiating a transaction $T$ modifies a key whose policy exceeds the client's initial estimate for $|V|$, it must contact more validation clients and wait on them to re-execute $T$. If this step happens early in the $T$'s execution, it causes little additional end-to-end latency, as the processing  of these extra requests overlaps with the transaction's ongoing execution.

The worst case is when the key update that triggers the need for a larger $V$ is $T$'s last: this adds one message delay to contact the additional validation clients plus the time those clients need to execute $T$---which they can do without issuing   additional remote database calls. 

We evaluate the impact of dynamic write-set discovery by varying the fraction of P-5 versus P-1 keys using a YCSB-based microbenchmark with a uniform access pattern. 
The experiment uses 20 clients and each transaction reads and updates five rows.
Clients always begin transactions by contacting one validation client.
If a client's transaction updates a P-5 key, then it dynamically contacts four additional validators.

Figure~\ref{fig:ws-discovery} shows the latency under dynamic discovery, compared with that obtained with initial perfect knowledge of $V$ (shaded).
As the percentage of P-5 keys increases, transactions are more likely to be delayed by waiting on four additional validation clients.
The resulting delay is primarily due to \sys{}'s cryptographic operations, as message delays and local transaction execution times are small.
The p99 latency also reflects the worst-case for dynamic discovery when the last update is to a P-5 key; at 50\% P-5 keys, the delay caused by lack of perfect initial knowledge is about 1 ms.


\myhdr{Takeaway} 
\sys{} remains efficient without advance knowledge of the write set. Dynamic discovery incurs only modest latency when policies are mostly homogeneous, with the overhead increasing to roughly a single message round trip as policy heterogeneity grows.

    
\section{Related Work}

\myhdr{BFT State Machine Replication}
State machine replication (SMR)~\cite{smr} provides the abstraction of a single fault-tolerant server and underpins a wide range of distributed storage systems in both crash fault-tolerant (CFT)~\cite{spanner,cockroachdb} and Byzantine fault-tolerant (BFT) settings~\cite{hyperledger-fabric,tendermint,librabft,omniledger}.
BFT SMR systems rely on consensus protocols~\cite{pbft,zyzzyva,hotstuff,aardvark,autobahn,sbft} to ensure replicas agree on a total order of requests despite arbitrary failures.
A common approach is to execute application logic at the replicas themselves, thereby ensuring that all committed state transitions arise from correct execution of application-defined commands.
Blockchains~\cite{ethereum,solana,avalanche} follow this model, often by embedding application logic in custom smart-contract languages.
However, colocating application execution with storage imposes significant usability costs, including reliance on domain-specific languages and limited ability to scale application logic independently of the backend~\cite{pavlo-sigmod-keynote}.

\myhdr{Client-Side Computation in Byzantine Systems} Several BFT systems improve performance and developer usability by shifting application execution to clients.
Client-centric transactional systems such as Basil, Pesto, and HRDB~\cite{basil,pesto,hrdb} expose database-style interfaces while relying on replicas to enforce ordering and isolation.
Similarly, client-driven Byzantine storage and quorum systems~\cite{byzantine-quorum-systems,q/u,HQ} delegate computation to clients while replicas ensure consistency over stored values or updates.
While these designs provide strong guarantees at the storage or consistency level, they share a common limitation: replicas do not validate the semantic correctness of client-side computation.
Authenticated Byzantine clients may therefore issue operations that are serializable or quorum-certified yet violate application-level invariants.
When such systems are used as substrates for transactional or application-level services, they relinquish the guarantee that committed updates arise from correct executions of application logic.
\sys{} addresses this architectural gap by restoring application execution validity in systems that permit client-side interactive computation, without reverting to replica-side execution.

\myhdr{Transaction Validation by Redundant Execution}
Prior systems~\cite{hyperledger-fabric,chainify-db,eve,nysiad} employ redundant execution to validate transaction outcomes.
However, these approaches retain the SMR execution model and require application logic to execute within the replica, limiting scalability and flexibility.
Moreover, redundant execution typically occurs over replica-local state, allowing even correct replicas to diverge.
In contrast, \sys{} uses redundant execution among clients to validate interactive transactions while ensuring that all correct validators reach identical outcomes.

\myhdr{Database Integrity Constraints} Modern databases support integrity constraints that automatically enforce semantic properties of stored data ({\em e.g.}, non-negativity and referential integrity)~\cite{db-sys-textbook,postgres}. Although these constraints remain valuable in \sys{}, enforcing constraints powerful enough to check whether a transaction's execution is valid quickly degenerates into re-executing the transaction within the database. For example, even for a simple payment transaction (Figure~\ref{fig:boundary}), verifying that credits equal debits requires examining the transaction inputs together with the initial and final account balances, effectively re-executing the transaction. Instead, \sys{} keeps transaction execution entirely client-side while still ensuring valid application executions.

\myhdr{Information Flow}
\sys{} draws inspiration from classical information flow control (IFC) techniques~\cite{noninterference,biba-integrity,language-based-ifc}. 
Language-based systems~\cite{fabric,jflow,ifdb} enforce confidentiality and integrity by ensuring data flows respect security policies. 
These approaches address a different problem: they regulate \textit{who} may access or modify data, but not \textit{how} application logic is executed. 
A node with sufficient privileges may still perform incorrect application logic while respecting IFC constraints. 
\sys{}, in contrast, focuses on ensuring that transactions are executed according to their application-defined semantics, regardless of which clients are authorized to execute them.



\section{Conclusion}

This paper presents \sys{}, a framework for providing fault tolerance against Byzantine clients in client-centric, interactive transactional systems. 
\sys{} makes it possible to reconcile two seemingly conflicting goals: the scalability and expressiveness of client-side transactions, which can benefit from mature programming
languages, tooling, and development workflows; and the strong safety
guarantees of server-side validation. 

\sys{} achieves this reconciliation by leveraging redundant client execution and explicit
endorsement of transaction outcomes, ensuring that committed transactions adhere to application-level integrity policies even in the presence of malicious clients. 
A key contribution of \sys{} is its support for flexible, heterogeneous policies that allow applications to express fine-grained integrity requirements without modifying underlying database protocols. 
By integrating endorsement validation and policy enforcement into the commit path, \sys{} enables existing BFT transactional systems to safely support interactive, client-driven execution with modest overhead.


\begin{acks}
    We are grateful to our shepherd and the anonymous SOSP reviewers for their thorough and insightful comments.
    This work was supported in part by the NSF grant CNS-CORE 2106954.
\end{acks}

\bibliographystyle{ACM-Reference-Format}
\bibliography{references}

@inproceedings{basil,
    author = {Suri-Payer, Florian and Burke, Matthew and Wang, Zheng and Zhang, Yunhao and Alvisi, Lorenzo and Crooks, Natacha},
    title = {Basil: Breaking up BFT with ACID (transactions)},
    year = {2021},
    isbn = {9781450387095},
    publisher = {Association for Computing Machinery},
    address = {New York, NY, USA},
    url = {https://doi.org/10.1145/3477132.3483552},
    doi = {10.1145/3477132.3483552},
    booktitle = {Proceedings of the ACM SIGOPS 28th Symposium on Operating Systems Principles},
    pages = {1--17},
    numpages = {17},
    location = {Virtual Event, Germany},
    series = {SOSP '21}
}

@inproceedings{pesto,
    author = {Suri-Payer, Florian and Giridharan, Neil and Arzola, Liam and Cohen, Shir and Alvisi, Lorenzo and Crooks, Natacha},
    title = {Pesto: Cooking up High Performance BFT Queries},
    year = {2025},
    isbn = {9798400718700},
    publisher = {Association for Computing Machinery},
    address = {New York, NY, USA},
    url = {https://doi.org/10.1145/3731569.3764799},
    doi = {10.1145/3731569.3764799},
    booktitle = {Proceedings of the ACM SIGOPS 31st Symposium on Operating Systems Principles},
    pages = {529--554},
    numpages = {26},
    location = {Lotte Hotel World, Seoul, Republic of Korea},
    series = {SOSP '25}
}

@inproceedings{hyperledger-fabric,
    author = {Androulaki, Elli and Barger, Artem and Bortnikov, Vita and Cachin, Christian and Christidis, Konstantinos and De Caro, Angelo and Enyeart, David and Ferris, Christopher and Laventman, Gennady and Manevich, Yacov and Muralidharan, Srinivasan and Murthy, Chet and Nguyen, Binh and Sethi, Manish and Singh, Gari and Smith, Keith and Sorniotti, Alessandro and Stathakopoulou, Chrysoula and Vukoli\'{c}, Marko and Cocco, Sharon Weed and Yellick, Jason},
    title = {Hyperledger fabric: a distributed operating system for permissioned blockchains},
    year = {2018},
    isbn = {9781450355841},
    publisher = {Association for Computing Machinery},
    address = {New York, NY, USA},
    url = {https://doi.org/10.1145/3190508.3190538},
    doi = {10.1145/3190508.3190538},
    booktitle = {Proceedings of the Thirteenth EuroSys Conference},
    articleno = {30},
    numpages = {15},
    location = {Porto, Portugal},
    series = {EuroSys '18}
}

@inproceedings {pbft,
    author = {Miguel Castro and Barbara Liskov},
    title = {Practical Byzantine Fault Tolerance},
    booktitle = {3rd Symposium on Operating Systems Design and Implementation (OSDI 99)},
    year = {1999},
    address = {New Orleans, LA},
    url = {https://www.usenix.org/conference/osdi-99/practical-byzantine-fault-tolerance},
    publisher = {USENIX Association},
    month = feb
}

@article{zyzzyva,
    author = {Kotla, Ramakrishna and Alvisi, Lorenzo and Dahlin, Mike and Clement, Allen and Wong, Edmund},
    title = {Zyzzyva: Speculative Byzantine fault tolerance},
    year = {2010},
    issue_date = {December 2009},
    publisher = {Association for Computing Machinery},
    address = {New York, NY, USA},
    volume = {27},
    number = {4},
    issn = {0734-2071},
    url = {https://doi.org/10.1145/1658357.1658358},
    doi = {10.1145/1658357.1658358},
    journal = {ACM Trans. Comput. Syst.},
    month = jan,
    articleno = {7},
    numpages = {39}
}

@inproceedings{byzantine-quorum-systems,
    author = {Malkhi, Dahlia and Reiter, Michael},
    title = {Byzantine quorum systems},
    year = {1997},
    isbn = {0897918886},
    publisher = {Association for Computing Machinery},
    address = {New York, NY, USA},
    url = {https://doi.org/10.1145/258533.258650},
    doi = {10.1145/258533.258650},
    booktitle = {Proceedings of the Twenty-Ninth Annual ACM Symposium on Theory of Computing},
    pages = {569--578},
    numpages = {10},
    location = {El Paso, Texas, USA},
    series = {STOC '97}
}

@inproceedings{pavlo-sigmod-keynote,
    author = {Pavlo, Andrew},
    title = {What Are We Doing With Our Lives? Nobody Cares About Our Concurrency Control Research},
    year = {2017},
    isbn = {9781450341974},
    publisher = {Association for Computing Machinery},
    address = {New York, NY, USA},
    url = {https://doi.org/10.1145/3035918.3056096},
    doi = {10.1145/3035918.3056096},
    booktitle = {Proceedings of the 2017 ACM International Conference on Management of Data},
    pages = {3},
    numpages = {1},
    location = {Chicago, Illinois, USA},
    series = {SIGMOD '17}
}

@article{oltp-bench,
    author = {Difallah, Djellel Eddine and Pavlo, Andrew and Curino, Carlo and Cudre-Mauroux, Philippe},
    title = {OLTP-Bench: an extensible testbed for benchmarking relational databases},
    year = {2013},
    issue_date = {December 2013},
    publisher = {VLDB Endowment},
    volume = {7},
    number = {4},
    issn = {2150-8097},
    url = {https://doi.org/10.14778/2732240.2732246},
    doi = {10.14778/2732240.2732246},
    journal = {Proc. VLDB Endow.},
    month = dec,
    pages = {277--288},
    numpages = {12}
}

@inproceedings{cloudlab,
    title = "The Design and Operation of {CloudLab}",
    author = "Dmitry Duplyakin and Robert Ricci and Aleksander Maricq and Gary Wong and Jonathon Duerig and Eric Eide and Leigh Stoller and Mike Hibler and David Johnson and Kirk Webb and Aditya Akella and Kuangching Wang and Glenn Ricart and Larry Landweber and Chip Elliott and Michael Zink and Emmanuel Cecchet and Snigdhaswin Kar and Prabodh Mishra",
    booktitle = "Proceedings of the {USENIX} Annual Technical Conference (ATC)",
    pages = "1--14",
    year = 2019,
    month = jul,
    url = "https://www.flux.utah.edu/paper/duplyakin-atc19"
}

@inproceedings{fabric,
    author = {Liu, Jed and George, Michael D. and Vikram, K. and Qi, Xin and Waye, Lucas and Myers, Andrew C.},
    title = {Fabric: a platform for secure distributed computation and storage},
    year = {2009},
    isbn = {9781605587523},
    publisher = {Association for Computing Machinery},
    address = {New York, NY, USA},
    url = {https://doi.org/10.1145/1629575.1629606},
    doi = {10.1145/1629575.1629606},
    booktitle = {Proceedings of the ACM SIGOPS 22nd Symposium on Operating Systems Principles},
    pages = {321--334},
    numpages = {14},
    location = {Big Sky, Montana, USA},
    series = {SOSP '09}
}

@inproceedings{peerreview,
    author = {Haeberlen, Andreas and Kouznetsov, Petr and Druschel, Peter},
    title = {PeerReview: practical accountability for distributed systems},
    year = {2007},
    isbn = {9781595935915},
    publisher = {Association for Computing Machinery},
    address = {New York, NY, USA},
    url = {https://doi.org/10.1145/1294261.1294279},
    doi = {10.1145/1294261.1294279},
    booktitle = {Proceedings of Twenty-First ACM SIGOPS Symposium on Operating Systems Principles},
    pages = {175--188},
    numpages = {14},
    location = {Stevenson, Washington, USA},
    series = {SOSP '07}
}

@inproceedings{iaccf,
    author = {Alex Shamis and Peter Pietzuch and Burcu Canakci and Miguel Castro and Cedric Fournet and Edward Ashton and Amaury Chamayou and Sylvan Clebsch and Antoine Delignat-Lavaud and Matthew Kerner and Julien Maffre and Olga Vrousgou and Christoph M. Wintersteiger and Manuel Costa and Mark Russinovich},
    title = {{IA-CCF}: Individual Accountability for Permissioned Ledgers},
    booktitle = {19th USENIX Symposium on Networked Systems Design and Implementation (NSDI 22)},
    year = {2022},
    isbn = {978-1-939133-27-4},
    address = {Renton, WA},
    pages = {467--491},
    url = {https://www.usenix.org/conference/nsdi22/presentation/shamis},
    publisher = {USENIX Association},
    month = apr
}

@inproceedings{chainify-db,
  author = {Felix Martin Schuhknecht and Ankur Sharma and Jens Dittrich and Divya Agrawal},
  title = {chainifyDB: How to get rid of your Blockchain and use your DBMS instead},
  booktitle = {11th Annual Conference on Innovative Data Systems Research (CIDR '21)},
  year = {2021}
}

@inproceedings{smallQS,
  author = {J-P Martin and Lorenzo Alvisi and Michael Dahlin},
  title = {Small {Q}uorum {S}ystems},
  booktitle = {Proceedings of the 32nd International Conference on Dependable Systems and Networks (DSN'02)},
  pages = {374-383},
  year = {2002}
}

@inproceedings{eve,
    author = {Kapritsos, Manos and Wang, Yang and Quema, Vivien and Clement, Allen and Alvisi, Lorenzo and Dahlin, Mike},
    title = {All about {E}ve: execute-verify replication for multi-core servers},
    year = {2012},
    isbn = {9781931971966},
    publisher = {USENIX Association},
    address = {USA},
    booktitle = {Proceedings of the 10th USENIX Conference on Operating Systems Design and Implementation},
    pages = {237--250},
    numpages = {14},
    location = {Hollywood, CA, USA},
    series = {OSDI'12}
}

@article{smr,
    author = {Schneider, Fred B.},
    title = {Implementing fault-tolerant services using the state machine approach: a tutorial},
    year = {1990},
    issue_date = {Dec. 1990},
    publisher = {Association for Computing Machinery},
    address = {New York, NY, USA},
    volume = {22},
    number = {4},
    issn = {0360-0300},
    url = {https://doi.org/10.1145/98163.98167},
    doi = {10.1145/98163.98167},
    journal = {ACM Comput. Surv.},
    month = dec,
    pages = {299--319},
    numpages = {21}
}

@inproceedings{hotstuff,
    author = {Yin, Maofan and Malkhi, Dahlia and Reiter, Michael K. and Gueta, Guy Golan and Abraham, Ittai},
    title = {HotStuff: BFT Consensus with Linearity and Responsiveness},
    year = {2019},
    isbn = {9781450362177},
    publisher = {Association for Computing Machinery},
    address = {New York, NY, USA},
    url = {https://doi.org/10.1145/3293611.3331591},
    doi = {10.1145/3293611.3331591},
    booktitle = {Proceedings of the 2019 ACM Symposium on Principles of Distributed Computing},
    pages = {347--356},
    numpages = {10},
    location = {Toronto ON, Canada},
    series = {PODC '19}
}

@inproceedings{q/u,
    author = {Abd-El-Malek, Michael and Ganger, Gregory R. and Goodson, Garth R. and Reiter, Michael K. and Wylie, Jay J.},
    title = {Fault-scalable Byzantine fault-tolerant services},
    year = {2005},
    isbn = {1595930795},
    publisher = {Association for Computing Machinery},
    address = {New York, NY, USA},
    url = {https://doi.org/10.1145/1095810.1095817},
    doi = {10.1145/1095810.1095817},
    booktitle = {Proceedings of the Twentieth ACM Symposium on Operating Systems Principles},
    pages = {59--74},
    numpages = {16},
    location = {Brighton, United Kingdom},
    series = {SOSP '05}
}

@inproceedings {nysiad,
    author = {Chi Ho and Robbert Van Renesse and Mark Bickford and Danny Dolev},
    title = {Nysiad: Practical Protocol Transformation to Tolerate Byzantine Failures},
    booktitle = {5th USENIX Symposium on Networked Systems Design and Implementation (NSDI 08)},
    year = {2008},
    address = {San Francisco, CA},
    url = {https://www.usenix.org/conference/nsdi-08/nysiad-practical-protocol-transformation-tolerate-byzantine-failures},
    publisher = {USENIX Association},
    month = apr
}

@inproceedings{bftsmart,
    author = {Bessani, Alysson and Sousa, Jo\~{a}o and Alchieri, Eduardo E. P.},
    title = {State Machine Replication for the Masses with BFT-SMART},
    year = {2014},
    isbn = {9781479922338},
    publisher = {IEEE Computer Society},
    address = {USA},
    url = {https://doi.org/10.1109/DSN.2014.43},
    doi = {10.1109/DSN.2014.43},
    booktitle = {Proceedings of the 2014 44th Annual IEEE/IFIP International Conference on Dependable Systems and Networks},
    pages = {355--362},
    numpages = {8},
    series = {DSN '14}
}

@article{healthcare-blockchain,
    author = {Agbo, Cornelius C. and Mahmoud, Qusay H. and Eklund, J. Mikael},
    title = {Blockchain Technology in Healthcare: A Systematic Review},
    journal = {Healthcare},
    volume = {7},
    year = {2019},
    number = {2},
    article-number = {56},
    url = {https://www.mdpi.com/2227-9032/7/2/56},
    pubmed-id = {30987333},
    issn = {2227-9032},
    DOI = {10.3390/healthcare7020056}
}

@article{healthcare-blockchain2,
    author = {Angraal, Suveen and Krumholz, Harlan and Schulz, Wade},
    year = {2017},
    month = {09},
    pages = {e003800},
    title = {Blockchain Technology: Applications in Health Care},
    volume = {10},
    journal = {Circulation: Cardiovascular Quality and Outcomes},
    doi = {10.1161/CIRCOUTCOMES.117.003800}
}

@inproceedings{autobahn,
    author = {Giridharan, Neil and Suri-Payer, Florian and Abraham, Ittai and Alvisi, Lorenzo and Crooks, Natacha},
    title = {Autobahn: Seamless high speed BFT},
    year = {2024},
    isbn = {9798400712517},
    publisher = {Association for Computing Machinery},
    address = {New York, NY, USA},
    url = {https://doi.org/10.1145/3694715.3695942},
    doi = {10.1145/3694715.3695942},
    booktitle = {Proceedings of the ACM SIGOPS 30th Symposium on Operating Systems Principles},
    pages = {1--23},
    numpages = {23},
    location = {Austin, TX, USA},
    series = {SOSP '24}
}

@article{partial-sync,
    author = {Dwork, Cynthia and Lynch, Nancy and Stockmeyer, Larry},
    title = {Consensus in the presence of partial synchrony},
    year = {1988},
    issue_date = {April 1988},
    publisher = {Association for Computing Machinery},
    address = {New York, NY, USA},
    volume = {35},
    number = {2},
    issn = {0004-5411},
    url = {https://doi.org/10.1145/42282.42283},
    doi = {10.1145/42282.42283},
    journal = {J. ACM},
    month = apr,
    pages = {288--323},
    numpages = {36}
}

@misc{tpcc,
    author = {{Transaction Processing Performance Council}},
    howpublished = {\url{http://www.tpc.org/tpcc}},
    title = {The {TPC-C} home page}
}

@misc{protobuf,
    key = {Protocol Buffers},
    howpublished = {\url{https://protobuf.dev/}},
    title = {Protocol Buffers},
    note = {Accessed: 2025-11-04}
}

@misc{ed25519,
    author = {Daniel J.  Bernstein and Niels Duif and Tanja Lange and Peter Schwabe and Bo-Yin Yang},
    title = {High-speed high-security signatures},
    howpublished = {Cryptology {ePrint} Archive, Paper 2011/368},
    year = {2011},
    url = {https://eprint.iacr.org/2011/368}
}

@misc{ed25519-donna,
    author = {Andrew Moon},
    title = {ed25519-donna},
    howpublished = {\url{https://github.com/floodyberry/ed25519-donna}},
    note = {Accessed: 2025-11-04}
}

@misc{hmac,
    author = {Wei Dai},
    title = {CryptoPP},
    howpublished = {\url{https://github.com/weidai11/cryptopp}},
    note = {Accessed: 2025-11-04}
}

@misc{blake3,
    key = {blake3},
    title = {BLAKE3},
    howpublished = {\url{https://github.com/BLAKE3-team/BLAKE3}},
    note = {Accessed: 2025-11-04}
}

@misc{peloton,
    key = {peloton},
    title = {{Peloton-DB}},
    howpublished = {\url{https://db.cs.cmu.edu/peloton/}},
    note = {Accessed: 2025-11-04}
}

@misc{postgres,
    key = {{PostgreSQL}},
    title = {{PostgreSQL}},
    howpublished = {\url{https://www.postgresql.org/}}
}

@misc{ethereum,
    key = {Ethereum},
    howpublished = {\url{https://ethereum.org/}},
    title = {Ethereum},
    note = {Accessed: 2025-11-05}
}

@misc{bftsmart-github,
    key = {BFTSmart},
    howpublished = {\url{https://github.com/bft-smart/library}},
    title = {{Byzantine Fault-Tolerant (BFT) State Machine Replication (SMaRt) v1.2}},
    note = {Accessed: 2025-11-10},
}

@misc{diem,
    key = {Diem},
    title = {Diem},
    howpublished = {\url{https://en.wikipedia.org/wiki/Diem_(digital_currency)}},
    note = {Accessed: 2025-09-30},
}

@misc{digital-euro,
    key = {Digital Euro},
    title = {Digital Euro},
    howpublished = {\url{https://www.ecb.europa.eu/euro/digital_euro/html/index.en.html}},
    note = {Accessed: 2025-09-30},
}

@misc{fnality,
    key = {Fnality},
    title = {Fnality},
    howpublished = {\url{https://fnality.com/}},
    note = {Accessed: 2025-09-30},
}

@inproceedings{sbft,
    author={Golan Gueta, Guy and Abraham, Ittai and Grossman, Shelly and Malkhi, Dahlia and Pinkas, Benny and Reiter, Michael and Seredinschi, Dragos-Adrian and Tamir, Orr and Tomescu, Alin},
    booktitle={2019 49th Annual IEEE/IFIP International Conference on Dependable Systems and Networks (DSN)}, 
    title={SBFT: A Scalable and Decentralized Trust Infrastructure}, 
    year={2019},
    volume={},
    number={},
    pages={568--580},
    doi={10.1109/DSN.2019.00063}
}

@inproceedings{flexiblebft,
    author = {Malkhi, Dahlia and Nayak, Kartik and Ren, Ling},
    title = {Flexible Byzantine Fault Tolerance},
    year = {2019},
    isbn = {9781450367479},
    publisher = {Association for Computing Machinery},
    address = {New York, NY, USA},
    url = {https://doi.org/10.1145/3319535.3354225},
    doi = {10.1145/3319535.3354225},
    booktitle = {Proceedings of the 2019 ACM SIGSAC Conference on Computer and Communications Security},
    pages = {1041--1053},
    numpages = {13},
    location = {London, United Kingdom},
    series = {CCS '19}
}

@inproceedings{librabft,
    title={State Machine Replication in the Libra Blockchain},
    author={Mathieu Baudet and Avery Ching and Andrey Chursin and George Danezis and François Garillot and Zekun Li and Dahlia Malkhi and Oded Naor and Dmitri Perelman and Alberto Sonnino},
    year={2019},
    booktitle = {The Libra Association Technical Report}
}

@mastersthesis{tendermint,
    author = {Ethan Buchman, Jae Kwon, Zarko Milosevic},
    title = {Tendermint: Byzantine fault tolerance in the age of
blockchains},
    school = {The University of Guelph},
    year = 2016
}

@inproceedings{HQ,
    author = {Cowling, James and Myers, Daniel and Liskov, Barbara and  Rodrigues, Rodrigo and Shrira, Liuba},
    title = {{HQ} {R}eplication: {A} {H}ybrid {Q}uorum protocol for {B}yzantine fault tolerance},
    booktitle = {Proceedings of the 7th USENIX Symposium on Operating Systems Design and Implementation},
    series = {NSDI'06},
    year = 2006 
}

@inproceedings{aardvark,
    author = {Clement, Allen and Wong, Edmund and Alvisi, Lorenzo and Dahlin, Mike and Marchetti, Mirco},
    title = {Making {B}yzantine fault tolerant systems tolerate {B}yzantine faults},
    year = {2009},
    publisher = {USENIX Association},
    address = {USA},
    booktitle = {Proceedings of the 6th USENIX Symposium on Networked Systems Design and Implementation},
    pages = {153--168},
    numpages = {16},
    location = {Boston, Massachusetts},
    series = {NSDI'09}
}

@misc{solana,
    author = {Yakovenko, Anatoly},
    title = {Solana: A new architecture for a high performance blockchain v0.8.13},
    url = {https://solana.com/solana-whitepaper.pdf},
    year = {2018}
}

@misc{avalanche,
    author = {Sekniqi, Kevin and Laine, Daniel and Buttolph, Stephen and Sirer, Emin G\"{u}n},
    title = {Avalanche Platform},
    url = {https://cdn.prod.website-files.com/5d80307810123f5ffbb34d6e/6008d7bbf8b10d1eb01e7e16_Avalanche%20Platform%20Whitepaper.pdf},
    year = {2020}
}

@inproceedings{jflow,
    author = {Myers, Andrew C.},
    title = {JFlow: practical mostly-static information flow control},
    year = {1999},
    isbn = {1581130953},
    publisher = {Association for Computing Machinery},
    address = {New York, NY, USA},
    url = {https://doi.org/10.1145/292540.292561},
    doi = {10.1145/292540.292561},
    booktitle = {Proceedings of the 26th ACM SIGPLAN-SIGACT Symposium on Principles of Programming Languages},
    pages = {228--241},
    numpages = {14},
    location = {San Antonio, Texas, USA},
    series = {POPL '99}
}

@article{hrdb,
    author = {Vandiver, Ben and Balakrishnan, Hari and Liskov, Barbara and Madden, Sam},
    title = {Tolerating byzantine faults in transaction processing systems using commit barrier scheduling},
    year = {2007},
    issue_date = {December 2007},
    publisher = {Association for Computing Machinery},
    address = {New York, NY, USA},
    volume = {41},
    number = {6},
    issn = {0163-5980},
    url = {https://doi.org/10.1145/1323293.1294268},
    doi = {10.1145/1323293.1294268},
    journal = {SIGOPS Oper. Syst. Rev.},
    month = oct,
    pages = {59--72},
    numpages = {14}
}

@book{concurrency-control-recovery-db,
    author    = {Philip A. Bernstein and Vassos Hadzilacos and Nathan Goodman},
    title     = {Concurrency Control and Recovery in Database Systems},
    publisher = {Addison-Wesley},
    year      = {1987},
    isbn      = {0-201-10715-5}
}

@inproceedings{critique-ansi-sql-isolation,
    author    = {Hal Berenson and Philip A. Bernstein and Jim Gray and Jim Melton and Elizabeth O'Neil and Patrick O'Neil},
    title     = {A Critique of {ANSI} {SQL} Isolation Levels},
    booktitle = {Proceedings of the 1995 ACM SIGMOD International Conference on Management of Data},
    year      = {1995},
    pages     = {1--10},
    publisher = {ACM},
    address   = {San Jose, CA, USA}
}

@inproceedings{heterogeneous-paxos,
    author = {Isaac C. Sheff and Xinwen Wang and Robbert van Renesse and Andrew C. Myers},
    title = {Heterogeneous Paxos},
    booktitle = {24th International Conference on Principles of Distributed Systems, {OPODIS} 2020, Strasbourg, France (Virtual Conference), December 14-16, 2020},
    series = {LIPIcs},
    volume = {184},
    pages = {5:1--5:17},
    publisher = {Schloss Dagstuhl - Leibniz-Zentrum f{\"{u}}r Informatik},
    year = {2020},
    url = {https://doi.org/10.4230/LIPIcs.OPODIS.2020.5},
    doi = {10.4230/LIPICS.OPODIS.2020.5},
    bibsource = {dblp computer science bibliography, https://dblp.org}
}

@techreport{biba-integrity,
    author       = {Kenneth J. Biba},
    title        = {Integrity Considerations for Secure Computer Systems},
    institution  = {MITRE Corporation},
    type         = {Technical Report},
    number       = {MTR-3153},
    year         = {1977},
    address      = {Bedford, MA},
}

@inproceedings{noninterference,
    author    = {Joseph A. Goguen and Jos{\'e} Meseguer},
    title     = {Security Policies and Security Models},
    booktitle = {Proceedings of the 1982 IEEE Symposium on Security and Privacy},
    pages     = {11--20},
    year      = {1982},
    publisher = {IEEE Computer Society},
}

@article{language-based-ifc,
  author  = {Andrei Sabelfeld and Andrew C. Myers},
  title   = {Language-Based Information-Flow Security},
  journal = {IEEE Journal on Selected Areas in Communications},
  volume  = {21},
  number  = {1},
  pages   = {5--19},
  year    = {2003}
}

@inproceedings{ifdb,
    author = {Schultz, David and Liskov, Barbara},
    title = {IFDB: decentralized information flow control for databases},
    year = {2013},
    isbn = {9781450319942},
    publisher = {Association for Computing Machinery},
    address = {New York, NY, USA},
    url = {https://doi.org/10.1145/2465351.2465357},
    doi = {10.1145/2465351.2465357},
    booktitle = {Proceedings of the 8th ACM European Conference on Computer Systems},
    pages = {43--56},
    numpages = {14},
    location = {Prague, Czech Republic},
    series = {EuroSys '13}
}

@article{spanner,
    author = {Corbett, James C. and Dean, Jeffrey and Epstein, Michael and Fikes, Andrew and Frost, Christopher and Furman, J. J. and Ghemawat, Sanjay and Gubarev, Andrey and Heiser, Christopher and Hochschild, Peter and Hsieh, Wilson and Kanthak, Sebastian and Kogan, Eugene and Li, Hongyi and Lloyd, Alexander and Melnik, Sergey and Mwaura, David and Nagle, David and Quinlan, Sean and Rao, Rajesh and Rolig, Lindsay and Saito, Yasushi and Szymaniak, Michal and Taylor, Christopher and Wang, Ruth and Woodford, Dale},
    title = {Spanner: Google's Globally Distributed Database},
    year = {2013},
    issue_date = {August 2013},
    publisher = {Association for Computing Machinery},
    address = {New York, NY, USA},
    volume = {31},
    number = {3},
    issn = {0734-2071},
    url = {https://doi.org/10.1145/2491245},
    doi = {10.1145/2491245},
    journal = {ACM Trans. Comput. Syst.},
    month = aug,
    articleno = {8},
    numpages = {22}
}

@inproceedings{cockroachdb,
    author = {Taft, Rebecca and Sharif, Irfan and Matei, Andrei and VanBenschoten, Nathan and Lewis, Jordan and Grieger, Tobias and Niemi, Kai and Woods, Andy and Birzin, Anne and Poss, Raphael and Bardea, Paul and Ranade, Amruta and Darnell, Ben and Gruneir, Bram and Jaffray, Justin and Zhang, Lucy and Mattis, Peter},
    title = {CockroachDB: The Resilient Geo-Distributed SQL Database},
    year = {2020},
    isbn = {9781450367356},
    publisher = {Association for Computing Machinery},
    address = {New York, NY, USA},
    url = {https://doi.org/10.1145/3318464.3386134},
    doi = {10.1145/3318464.3386134},
    booktitle = {Proceedings of the 2020 ACM SIGMOD International Conference on Management of Data},
    pages = {1493--1509},
    numpages = {17},
    location = {Portland, OR, USA},
    series = {SIGMOD '20}
}

@inproceedings{omniledger,
    author={Kokoris-Kogias, Eleftherios and Jovanovic, Philipp and Gasser, Linus and Gailly, Nicolas and Syta, Ewa and Ford, Bryan},
    booktitle={2018 IEEE Symposium on Security and Privacy (SP)}, 
    title={OmniLedger: A Secure, Scale-Out, Decentralized Ledger via Sharding}, 
    year={2018},
    volume={},
    number={},
    pages={583-598},
    doi={10.1109/SP.2018.000-5}
}

@phdthesis{adya-thesis,
    title  = {Weak Consistency: A Generalized Theory and Optimistic Implementations for Distributed Transactions},
    author = {Adya, Atul},
    school = {Massachusetts Institute of Technology},
    year   = {1999},
}

@book{db-sys-textbook,
    title={Database Systems: The Complete Book},
    author={Garcia-Molina, Hector and Ullman, Jeffrey D. and Widom, Jennifer},
    year={2008},
    edition={2nd},
    publisher={Pearson},
    isbn={978-0131873254}
}

@misc{sintr-tr,
      title={Sintr: Safe Interactive Transactions in the Presence of Byzantine Clients}, 
      author={Austin T. Li and Daniel H. Lee and Lorenzo Alvisi and Natacha Crooks and Florian Suri-Payer},
      year={2026},
      eprint={2608.27091},
      archivePrefix={arXiv},
      primaryClass={cs.DC},
      url={https://arxiv.org/abs/2608.27091}, 
}

\clearpage
\appendix

\section{Proofs}

\subsection{Formal Definitions}

We begin by restating our definitions for an application with interactive transactions, our notion of safety for application executions, and noninterference.

\begin{restateDefinition}{def:application}
    \defApplication{}
\end{restateDefinition}

\begin{restateDefinition}{def:validity}
    \defValidity{}
\end{restateDefinition}

\begin{restateDefinition}{def:noninterference}
    \defNoninterference{}
\end{restateDefinition}

Note that with policy versions, noninterference must be clarified.
\sys{} ensures writes cannot circumvent the latest policy, and further assumes that policy changes are proactive in ensuring safety.
Thus, reading a value written under an old policy version should be considered safe.

We also formally define what faithful execution is.
Intuitively, a faithful execution produces reads and updates that would be expected from the transaction specification.

\begin{definition}
    Consider a transaction $T$ with inputs $x$, read set $\readset{T}$, and write set $\writeset{T}$.
    Here we treat $\readset{T}$ as an ordered series of read results from object to value (in the order of transaction execution), and $\writeset{T}$ as an unordered collection of object-value pairs.
    $T$ is a \emph{faithful execution} of $T_i \in \mathcal{T}$ if given $x$ and $\readset{T}$, executing $T_i$ produces $\writeset{T_i}$ that exactly matches $\writeset{T}$.
    At each point in $T_i$ where an intermediate read result to an object $O$ is required, the next unused read result in $\readset{T}$ is used if it is a read to $O$; otherwise, the transaction was not faithfully executed.
\end{definition}

\subsection{BFT Backend Assumptions}


We will prove \sys{}'s guarantees in general, without assuming a specific backend system.
To do so, we make a minimal set of assumptions on the BFT database; the systems we instantiate \sys{} on satisfy them.

First, we clarify the notion of a committed transaction.
In some BFT backend systems, even correct replicas may diverge in local state~\cite{basil,pesto}.
As such, we use the term \emph{globally committed} to distinguish transactions that are considered to be committed by the BFT database as a whole.

\begin{definition}
    Each database replica has a set of transactions that it locally considers to be committed.
    From this, the backend protocol defines the set of \emph{globally committed} transactions.

    In an SMR-based system, all correct replicas contain the same sequence of committed transactions; this exactly determines the globally committed transactions.
    In an inconsistent replication system ({\em e.g.}, Basil and Pesto), transactions are globally committed if there is a quorum of replicas which consider it as committed ({\em e.g.}, at least $3f+1$ of $5f+1$ replicas).
\end{definition}

\sys{} requires that the backend BFT database's fault assumptions hold ({\em i.e.}, at most $f$ faulty replicas).
From this, we require the following to be true of the BFT database.
    
\begin{assumption}\label{assumption:committed-correct-replicas}
    A globally committed transaction must have some correct replica which considers it committed.
\end{assumption}

Assumption~\ref{assumption:committed-correct-replicas} gives a weak safety guarantee on the backend protocol.
SMR-based systems satisfy this as all correct replicas consider the same transactions to be commited; Basil/Pesto satisfy this with their quorum requirements. 

Next, \sys{} requires serializability\footnote{
For BFT databases, serializability is formally defined as Byz-serializability~\cite{basil}.
Intuitively, Byz-isolation ensures that correct clients observe state that could have been produced by correct clients alone.
} for its governance transactions.
Because \sys{} performs its concurrency control for governance transactions at each replica, it requires that the BFT backend achieves serializability through its correct replicas.

\begin{assumption}\label{assumption:iso-correct-replicas}
    If correct replicas only commit transactions which are serializable, then the set of globally committed transactions are also serializable.
\end{assumption}

The SMR-based systems satisfy this as the set of globally committed transactions is exactly those which every correct replica commits.
Basil and Pesto each achieve this as well; see Lemma 1 and Theorem 1 in each.

The BFT backend system should also ensure that any two globally committed transactions have at least one correct replica which has locally commited both.
Intuitively, this is the only way in which correct replicas alone can determine the isolation level of the database.

\begin{assumption}\label{assumption:conflict-iso-correct-replicas}
    For any two globally committed transactions, there exists at least one correct replica which has locally committed both.
\end{assumption}

\subsection{Correctness Sketch}

We sketch the main theorems and lemmas necessary to show \sys{}'s safety properties.
We proceed in three steps: we first prove safety under a fixed application policy, then we show governance transactions are serializable, and finally we prove safety for versioned policies.

First, we prove that \sys{}'s protocol ensures correct validation clients only give matching endorsements for valid transactions.

\begin{restateLemma}{lemma:val-client-endorse-valid}
    \lemmaValClientEndorseValid{}
\end{restateLemma}

We then show that if \sys{}'s policy holds for an object, then the protocol ensures the object is only updated by valid committed transactions.

\begin{restateLemma}{lemma:object-safety-fixed-policy}
    \lemmaObjectSafetyFixedPolicy{}
\end{restateLemma}

We also show \sys{}'s protocol guarantees noninterference for objects that are not lifted.

\begin{restateLemma}{lemma:nonint-fixed-policy}
    \lemmaNonintFixedPolicy{}
\end{restateLemma}

Next, we prove that governance transactions are serializable.

\begin{theorem}\label{thm:gov-txn-serializable}
    \thmGovTxnSerializable{}
\end{theorem}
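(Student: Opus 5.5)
We argue that the set of globally committed transactions, governance and regular together, is serializable. By Assumption~\ref{assumption:iso-correct-replicas}, it suffices to show that every correct replica locally commits only a serializable set of transactions. Each correct replica runs the backend's native concurrency control unchanged, so it already serializes regular transactions with respect to one another and governance transactions with respect to one another. We model the regular-versus-governance conflicts produced by the implicit policy reads of Algorithm~\ref{alg:gov-txn-cc} as ordinary read--write conflicts on policy objects. Under this model, correctness reduces to standard OCC correctness over an extended read set $\readset{T}\cup \mathit{ImpPolicyReads}_T$.

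Concretely, we fix a correct replica and order its locally committed transactions by timestamp, and show this order is a valid serialization. Consider a regular $T$ and a governance $G$ with $p\in\writeset{G}$ and $(p,ts_p)\in\mathit{ImpPolicyReads}_T$. We must exclude the anomaly $ts_p<ts_G<ts_T$, in which $T$ would read a policy version that $G$ overwrote before $T$'s serialization point. The proof splits on which transaction is validated second at the replica.
\begin{itemize}
\item If $G$ committed first, the check on Lines~\ref{alg:gov-txn-cc-line:for-regular-vs-gov}--\ref{alg:gov-txn-cc-line:endfor-regular-vs-gov} aborts $T$. In addition, the latest-version check on Lines~\ref{alg:gov-txn-cc-line:if-latest}--\ref{alg:gov-txn-cc-line:endif-latest} rules out $T$ using an obsolete version.
\item If $T$ committed first, the check on Lines~\ref{alg:gov-txn-cc-line:for-check-implicit}--\ref{alg:gov-txn-cc-line:endfor-check-implicit} aborts $G$.
\end{itemize}
Hence every committed implicit read observes the version that immediately precedes $T$ in timestamp order. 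The remaining interleavings, where $ts_G<ts_p$ or $ts_G>ts_T$, are consistent with the timestamp order. Combining this with the native guarantees, the conflict graph restricted to this replica is acyclic and agrees with timestamp order.

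Finally, we lift the per-replica result to global serializability. We apply Assumption~\ref{assumption:iso-correct-replicas} directly, and use Assumption~\ref{assumption:conflict-iso-correct-replicas} to handle cross-replica pairs. Any two globally committed conflicting transactions $T$ and $G$ are both locally committed at some common correct replica. The argument above therefore applies to that pair, and the direction of every conflict edge is determined consistently by timestamps. A cycle among globally committed transactions would then contradict the timestamp order.

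The main obstacle is the concurrency case in which $T$ and $G$ are validated in interleaved fashion, or observed in different orders at different replicas. The cleanest way to handle it is to rely on the backend treating policy ids exactly like ordinary keys, so that its atomic validate-and-commit step and Byz-serializability carry the argument. We would verify that the extra checks in Algorithm~\ref{alg:gov-txn-cc} execute inside that atomic step.
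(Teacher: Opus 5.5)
Your proposal is correct and follows essentially the same route as the paper. Like the paper, you work per correct replica and show that every regular/governance conflict edge respects timestamp order. You exclude the anomaly $ts_p < ts_G < ts_T$ by splitting on whether $T$'s or $G$'s check in Algorithm~\ref{alg:gov-txn-cc} runs first, conclude acyclicity from the total order on timestamps, and lift to global serializability via Assumption~\ref{assumption:iso-correct-replicas}. Your extra appeals to Assumption~\ref{assumption:conflict-iso-correct-replicas} and to the latest-version check are harmless but unnecessary. The atomicity of per-replica checks that you flag as an obstacle is also assumed implicitly in the paper's own case split.
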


Finally, we show that \sys{}'s protocol still guarantees safety with versioned policies.

\begin{restateTheorem}{thm:sys-safety}
    \thmSysSafety{}
\end{restateTheorem}

For noninterference with versioned policies, we first prove that governance transactions maintain noninterference.

\begin{restateLemma}{lemma:nonint-gov-txn}
    \lemmaNonintGovTxn{}
\end{restateLemma}

From this, we obtain noninterference in general for \sys{}.

\begin{theorem}\label{thm:noninterference}
    \thmNoninterference{}
\end{theorem}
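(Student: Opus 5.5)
The statement is Theorem~\ref{thm:noninterference}: general noninterference for \sys{} under versioned policies. The plan is to reduce it to the fixed-policy result, Lemma~\ref{lemma:nonint-fixed-policy}, and to the governance result, Lemma~\ref{lemma:nonint-gov-txn}. The glue is the serializability of governance transactions from Theorem~\ref{thm:gov-txn-serializable}.

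I would fix an arbitrary globally committed, non-read-only regular transaction $T$ and a non-lifted read $r \in \readset{T}$ that could influence a write $w \in \writeset{T}$. The goal is to show that, in the clarified sense given after the restated definitions, the policy of $r$'s object is at least that of $w$'s object. That clarified sense is: write policy at $T$'s serialization point, and read policy evaluated at the version under which the read value was produced.

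First, I would use Assumption~\ref{assumption:committed-correct-replicas} to pick a correct replica $R$ that locally committed $T$. Since $R$ is correct, it ran Algorithm~\ref{alg:policy-leak}. In doing so it evaluated each write against the latest policy version before $ts_T$, and each non-lifted read against the version current when the read value was written. It also passed the check in Algorithm~\ref{alg:gov-txn-cc} requiring $ts_p$ to be the latest version at commit time.

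Next, I would handle the version dimension. By Theorem~\ref{thm:gov-txn-serializable}, $T$ and every governance transaction $G$ are serializable together. So there is a unique policy assignment in effect at $T$'s position in the serial order. I would use Assumption~\ref{assumption:conflict-iso-correct-replicas} to argue that the version $R$ used for each write's policy id equals the version in that serial prefix. Otherwise some correct replica committing both $T$ and a conflicting $G$ would have aborted one of them, by the bidirectional checks in Algorithm~\ref{alg:gov-txn-cc}.

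With the serial point fixed, the situation is exactly the fixed-policy setting, and Lemma~\ref{lemma:nonint-fixed-policy} applies to $T$. Transactions that are themselves governance transactions are covered separately by Lemma~\ref{lemma:nonint-gov-txn}. Read-only transactions write nothing and are trivially fine.

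The main obstacle is aligning the policy version each replica used with the policy in effect in the global serialization order. Basil/Pesto replicas may diverge locally, so $R$'s view at check time might lag or lead the globally committed governance history. I would first try to show, via the implicit policy reads and Assumption~\ref{assumption:conflict-iso-correct-replicas}, that any governance transaction ordered between $ts_p$ and $ts_T$ would have been detected by some correct replica that committed both. That would be a contradiction, so no such governance transaction exists. Reads are easier, since the clarification deems old-version reads safe by definition.
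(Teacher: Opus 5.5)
Your top-level plan matches the paper's. Its proof of Theorem~\ref{thm:noninterference} is simply Lemma~\ref{lemma:nonint-fixed-policy} combined with Lemma~\ref{lemma:nonint-gov-txn}. However, you misread what Lemma~\ref{lemma:nonint-gov-txn} covers. It is not about the information flow of governance transactions themselves; those are write-only and exempt from the check, so they are trivially harmless. It is exactly the claim that a \emph{regular} committed transaction cannot violate noninterference once policies are versioned. That is the ``version dimension'' you instead try to settle by hand. Applied to $T$, the lemma finishes the proof; your hand-rolled substitute does not.

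Two steps in your substitute fail. First, the alignment must happen at a single correct replica that has the right view of both relevant policies. These are $p_2$ (created by $G_2$, governing the write to $O_2$) and $p_1$ (created by $G_1$, the version under which the value of $O_1$ read by $T$ was produced). You fix $R$ via Assumption~\ref{assumption:committed-correct-replicas} and then appeal to Assumption~\ref{assumption:conflict-iso-correct-replicas}. That assumption only yields, for each pair of transactions, \emph{some} correct replica that committed both. So the contradiction you invoke does not constrain the versions $R$ actually used. Pairwise intersection also cannot produce one replica that committed $T$, $G_1$ and $G_2$. The paper needs the separate, system-specific Lemma~\ref{lemma:replica-seen-two-gov-txns} for exactly this, using quorum intersection plus the requirement that governance transactions commit at all correct replicas.

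Second, ``reads are easier'' is backwards. The bidirectional checks of Algorithm~\ref{alg:gov-txn-cc} record implicit policy reads only for keys in $T$'s write set, so nothing there ties $T$ to $p_1$. A correct replica that ran $T$'s check before locally committing $G_1$ would evaluate $O_1$ against an older version of $p_1$ that looks latest in its local history (Algorithm~\ref{alg:policy-leak-versioned}). The clarification only exempts reads whose policy version is superseded in the \emph{global} order, not in a replica's local view. Ruling this out is the entire content of the paper's case split on whether $G_1$ or $T$ was checked first. That argument goes through the writer $T'$ of $O_1$'s version and the requirement that $T'$ be validated against the latest policy. Without these two ingredients, your reduction of $T$ to the fixed-policy setting is unjustified.
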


\subsection{Safety with Fixed Application Policy}

Assuming a fixed application policy, we prove that \sys{} is safe.
In particular, we show application execution validity and noninterference.

\begin{lemma}\label{lemma:val-client-endorse-valid}
    \lemmaValClientEndorseValid{}
\end{lemma}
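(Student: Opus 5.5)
The plan is to argue directly from the behavior of a correct validation client $C_V$ and then transfer its view to any commit message its endorsement can match. Fix a transaction identifier $\txid{}$ and a correct validation client $C_V$ that has produced an endorsement $\langle H_T \rangle_{\sigma_{C_V}}$. I would first establish what $C_V$ has done before signing. It received a $\beginval{}$ message naming some $T_i \in \mathcal{T}$ with inputs $x$, and it executed $T_i$ locally on $x$. At every database request it blocked until it received a forwarded result $\fwdres{}$ whose proof $\reqproof{}$ verified and was bound to $\reqid{}$. It then resumed deterministic execution with that result. Finally it evaluated the lift function and hashed $\txid{}$ together with the resulting read set (lifted keys removed) and write set. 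So $C_V$ holds an ordered sequence of verified read results $R$ and a write set $W$ that, by determinism, are exactly what $T_i$ produces on $x$ and $R$. By the definition of faithful execution, $(x, R, W)$ is therefore a faithful execution of $T_i$. I would note in passing that proof verification is needed only to make $R$ consist of genuine backend responses, since faithfulness as defined is relative to the read results used.

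Second, I would connect the endorsement to the commit message. A replica accepts an endorsement only if its hash equals the $H_T$ that the replica itself recomputes from the $\readset{T}$, $\writeset{T}$ and $\txid{}$ carried in $\commit{}$. By collision resistance of the hash, a matching hash forces the commit message's $\txid{}$, $\writeset{T}$ and non-lifted $\readset{T}$ to equal $C_V$'s values. Two further facts are needed here. Signature unforgeability guarantees that a Byzantine initiator cannot fabricate $C_V$'s endorsement. Inclusion of the unique $\txid{}$ in $H_T$ prevents an endorsement from one transaction being replayed for another. Together these give that every transaction whose commit message contains a valid matching endorsement from a correct client is a faithful execution of an application transaction, and that all correct endorsers of the same $H_T$ agree on it.

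The main obstacle is the lifted keys, because the hash covers the read set with lifted keys removed, so a hash match constrains only the non-lifted reads. I would handle this in three steps. Since lifting is determined by the deterministic lift function applied to the full transaction state, a correct client's choice of lifted keys is itself determined by $(T_i, x, R)$. The write set, which is the externally visible effect, is fully included in the hash. It remains to argue that any two read sequences consistent with the same hash still yield a faithful execution producing that write set. Concretely, I would prove the claim for the execution $C_V$ actually observed and then observe that the replica's checks only ever use the hashed components. If this gap proves too wide, the fallback is to state the lemma in terms of the hashed view, namely the write set and the non-lifted reads, which is what the later Lemma~\ref{lemma:object-safety-fixed-policy} consumes.
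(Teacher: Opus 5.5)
Your first two paragraphs are the paper's proof, run directly rather than by contradiction. A correct $C_V$ only starts validating some $T_i \in \mathcal{T}$ and executes it deterministically on forwarded results certified by $\reqproof{}$, so it produces a faithful execution of $T_i$. Its signed $H_T$ then ties the commit message's write set to that execution; the paper leaves collision resistance, unforgeability and the role of $\txid{}$ implicit. One small correction to your aside: the paper uses the proof for more than the genuineness of the values. Through its binding to $\reqid{}$, it guarantees that each result is a read of exactly the object $T_i$ requests at that point, which the second clause of the faithful-execution definition requires.

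The lifted-key concern in your third paragraph is legitimate, and the paper's proof does not resolve it either. It passes from ``$T$ is not a faithful execution of $T_i$'' straight to ``$T$'s write set differs from $C_V$'s.'' That step presupposes that $T$'s read set is the one $C_V$ executed on, and the hash enforces this only for non-lifted reads.

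Do not try to prove your proposed intermediate claim, that any two read sequences agreeing on the hashed components yield a faithful execution with the same write set. It is false in general, because lifted reads (e.g.\ the order records in \texttt{Delivery}) feed directly into the writes. Changing a lifted read value while keeping the same lift set and the same writes leaves $H_T$ unchanged, but breaks faithfulness relative to the changed reads.

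The only sound way to close the argument is the option you list as a fallback. Take $C_V$'s own execution $(x,R,W)$ as the witness, and read ``$T$ is valid'' as a statement about $T$'s endorsed view: $\txid{}$, the non-lifted reads, and the write set. This is exactly the identification the paper makes implicitly, and it is all that Lemma~\ref{lemma:object-safety-fixed-policy} uses. With it, your proof is complete and matches the paper's.
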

\begin{proof}
    Suppose for the sake of contradiction that $T$ is not valid but it received a matching endorsement from a correct validation client $C_V$.
    Because $C_V$ is correct, it will only begin validation for transactions in $\mathcal{T}$.
    Thus, to receive an endorsement, the initiating client $C$ must have claimed $T$ to be some transaction $T_i \in \mathcal{T}$.

    We assume that $T_i$ is deterministic.
    Any local computation can only depend on the inputs to the transaction and intermediate database results.
    Since $C_V$ is correct, it will execute all local computation in $T_i$ according to its specification.
    Furthermore, because all forwarded database results contain \reqproof{} (which certify both \reqid{} and \reqres{}), $C_V$ will only use correct database results where the read object exactly matches what $T_i$ would require at that point.
    Thus, $C_V$ can only produce a faithful execution of $T_i$.
    However, since $T$ is not a faithful execution of $T_i$, it must have a write set that differs.
    The resulting endorsement hash contains the write set, so the endorsement $C_V$ gives cannot match transaction $T$.
    This is a contradiction.
\end{proof}

\begin{lemma}\label{lemma:object-safety-fixed-policy}
    \lemmaObjectSafetyFixedPolicy{}
\end{lemma}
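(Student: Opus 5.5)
Since the macro \lemmaObjectSafetyFixedPolicy{} is not expanded, I take the statement to be: under a fixed application policy, if an object $O$ has policy $x$ and at most $x$ clients are Byzantine, then every globally committed non-read-only transaction that writes $O$ is a faithful execution of some $T_i \in \mathcal{T}$. The proof is by contradiction. Suppose a globally committed transaction $T$ writes $O$ but is not valid.

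First, by Assumption~\ref{assumption:committed-correct-replicas}, some correct replica $R$ locally considers $T$ committed. Because $R$ is correct, it voted to commit only after Algorithm~\ref{alg:endorse-val} returned $\True$ on $T$'s commit message. The policy is fixed, so $R$'s $\mathit{GetPolicy}(O)$ returns $x$. Since $O \in \writeset{T}$, the set $E_T$ of endorsements that carry valid signatures and match $H_T$ therefore has size at least $x$.

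Next I count who these endorsements come from. Each endorsement is signed, and signatures cannot be forged, so each one was produced by the client whose key signed it. These are validation clients distinct from the initiating client, and $E_T$ is treated as a set of distinct signers. I will need to make explicit that the count in Algorithm~\ref{alg:endorse-val} is over distinct signers; otherwise duplicates could inflate it. Now suppose every endorsement in $E_T$ came from a Byzantine client. If $T$ is invalid, the initiating client is Byzantine too, because a correct client executes $T_i$ faithfully. That would give at least $x+1$ colluding Byzantine clients, contradicting the bound of $x$. Hence some endorsement in $E_T$ was produced by a correct validation client $C_V$.

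That endorsement matches $H_T$, which $R$ computed from $T$'s identifier, read set and write set. Lemma~\ref{lemma:val-client-endorse-valid} says a correct validation client only issues matching endorsements for valid transactions, so $T$ is valid, which is the contradiction. The main obstacle is this last step. I must argue that the hash $R$ recomputes binds exactly the $\writeset{T}$ that is actually applied. I must also rule out reuse of an endorsement across transactions: $\txid{}$ is included in $H_T$ and collision resistance holds, so an endorsement for one transaction cannot match another. Finally, lifted keys are removed from the read set before hashing, but they are still determined by the deterministic lift function. So a mismatch in lifted keys also changes the hash, and Lemma~\ref{lemma:val-client-endorse-valid} applies to exactly the transaction that commits.
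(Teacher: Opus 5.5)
Your proof is correct and takes the same route as the paper. A correct replica that committed $T$ (Assumption~\ref{assumption:committed-correct-replicas}) saw at least $p_O$ valid matching endorsements, so with the initiating client at least $p_O+1$ clients back the same hash; since at most $p_O$ of them are Byzantine, either the initiator or some endorser is correct, and Lemma~\ref{lemma:val-client-endorse-valid} then forces $T$ to be valid. Your explicit condition that endorsements come from distinct signers other than the initiating client is the same assumption the paper makes implicitly when it counts $p_O+1$ distinct clients.
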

\begin{proof}
    Suppose for the sake of contradiction that there exists an invalid committed transaction $T$ which updates $O$.
    By Assumption~\ref{assumption:committed-correct-replicas}, there exists at least one correct replica which committed $T$.
    At that replica, $T$ must have passed \sys{}'s endorsement check.
    Thus, $T$ contains at least $p_O$ matching endorsements 
    (Algorithm~\ref{alg:endorse-val}
    Lines~\ref{alg:endorse-val-line:for-write}--\ref{alg:endorse-val-line:endfor-write}).
    Because the replica checks for endorsement validity
    (Algorithm~\ref{alg:endorse-val}
    Lines~\ref{alg:endorse-val-line:for-endorse}--\ref{alg:endorse-val-line:endfor-endorse}), including the initiating client,
    there are at least $p_O+1$ clients which obtained the same result for $T$.
    The number of Byzantine clients is $f_O \le p_O$, so there must be at least one correct client which endorsed $T$, or the initiating client was correct.
    If the initiating client was correct, then $T$ must be valid. 
    Otherwise, by Lemma~\ref{lemma:val-client-endorse-valid}, since only valid transactions receive endorsements from correct clients, $T$ must be valid.
    In either case, this is a contradiction.
\end{proof}

\begin{lemma}\label{lemma:all-object-safety}
    \lemmaAllObjectSafety{}
\end{lemma}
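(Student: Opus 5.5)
The statement itself is hidden behind the macro \lemmaAllObjectSafety{}. From where it sits, right after Lemma~\ref{lemma:object-safety-fixed-policy}, I take it to be the global version of that result. That is: under a fixed application policy, if every object $O$ has policy $p_O \ge f_O$, then every globally committed non-read-only transaction is valid. Equivalently, it is application execution validity for the fixed-policy case.

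The plan is to reduce this to Lemma~\ref{lemma:object-safety-fixed-policy} applied object by object. Let $T$ be any globally committed transaction that is not read-only. By definition its write set $\writeset{T}$ is nonempty, so I pick some object $O \in \writeset{T}$. The hypothesis gives $f_O \le p_O$ for this $O$. Lemma~\ref{lemma:object-safety-fixed-policy} then says every committed transaction that updates $O$ is valid, and $T$ is one of them, so $T$ is valid. Since $T$ was arbitrary, every committed update transaction is valid, which is exactly Definition~\ref{def:validity}. Read-only transactions are excluded by that definition, so they need no argument.

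The step needing care is matching the quantifiers. Validity must come from at least one written object whose policy actually bounds the collusion it faces. If the lemma only assumes the bound for some objects, I would restrict the conclusion to transactions writing at least one protected object. I would also argue that the endorsement check in Algorithm~\ref{alg:endorse-val} is applied to every key in $\writeset{T}$. Hence $|E_T| \ge \max_{O\in\writeset{T}} p_O$, which means the strongest-protected written object determines the threshold.

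A second subtlety is lifting and information flow. Validity concerns only faithful execution, and Lemma~\ref{lemma:val-client-endorse-valid} already covers lifted keys through the hash. So lifting does not affect this argument; it matters only for noninterference, which Lemma~\ref{lemma:nonint-fixed-policy} handles separately.
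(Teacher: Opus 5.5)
Your core step (a committed non-read-only $T$ writes some object $O$, every update to $O$ comes from a valid committed transaction, hence $T$ is valid) is exactly the paper's one-line proof. The only difference is that the paper takes ``every object is only updated by valid committed transactions'' as the lemma's hypothesis, rather than deriving it from $f_O \le p_O$ via Lemma~\ref{lemma:object-safety-fixed-policy}. This keeps the lemma independent of fixed policies, so it can later be combined with Lemma~\ref{lemma:object-safety} to prove Theorem~\ref{thm:sys-safety} under versioned policies.
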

\begin{proof}
    If all objects are only updated by valid committed transactions, then all non-read-only committed transactions are valid, which is exactly the definition of application execution validity.
\end{proof}



\begin{lemma}\label{lemma:nonint-fixed-policy}
    \lemmaNonintFixedPolicy{}
\end{lemma}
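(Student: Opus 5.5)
We argue by contradiction, in the same style as Lemma~\ref{lemma:object-safety-fixed-policy}. Suppose some globally committed transaction $T$ violates noninterference: it writes an object $O$ with policy $p_O$ but reads a non-lifted object $O'$ with policy $p_{O'} < p_O$. By Assumption~\ref{assumption:committed-correct-replicas}, some correct replica $R$ locally committed $T$. Hence $T$ passed both of $R$'s checks: endorsement validation (Algorithm~\ref{alg:endorse-val}) and the policy information-flow check (Algorithm~\ref{alg:policy-leak}). The rest of the proof shows that passing Algorithm~\ref{alg:policy-leak} is incompatible with such a read.

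First I fix which read set $R$ checked. $R$ evaluates Algorithm~\ref{alg:policy-leak} on the set $\readset{T}$ from the commit message, minus the keys claimed as lifted. A Byzantine initiating client might try to shrink this set, either by omitting $O'$ or by falsely marking it as lifted. Such a false claim only helps if $R$ accepted $T$, so $T$ also passed Algorithm~\ref{alg:endorse-val}. Since $T$ writes $O$, it therefore carries at least $p_O$ valid endorsements whose hash equals $H_T$. Because $f_O \le p_O$, the counting step of Lemma~\ref{lemma:object-safety-fixed-policy} gives either a correct initiating client or at least one correct endorser. In both cases a correct client computed $H_T$.

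$H_T$ is computed over the transaction identifier, the read set (with lifted keys removed) and the write set. By Lemma~\ref{lemma:val-client-endorse-valid} and the determinism of the lift function, a correct client computes the true read set and the true lifted set of $T$. The hash therefore binds the read set that $R$ checks to the real one. If $O'$ is genuinely read and genuinely not lifted, then $O' \in \readset{T}$ at $R$.

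Second, $R$ computes $P_T = \max\{p_k : k \in \writeset{T}\} \ge p_O$. Because the policy is fixed, $R$'s \textit{GetPolicy} returns the true $p_{O'}$ at Line~\ref{alg:policy-leak-line:get-read-policy}. Then $p_{O'} < p_O \le P_T$, so the check at Line~\ref{alg:policy-leak-line:if-implies} returns $\False$, and $R$ votes to abort. This contradicts the fact that $R$ committed $T$.

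The main obstacle is the first step: proving that a Byzantine initiating client cannot hide $O'$ from the information-flow check. The argument depends on two facts. The hash covers exactly the post-lift read set. The endorsement threshold attached to the high-policy write $O$ guarantees a correct witness. I will state explicitly that, because $f_O \le p_O$, a correct client's view constrains the read set of any committed writer of $O$.
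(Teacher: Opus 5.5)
Your closing step is the paper's whole proof. Assumption~\ref{assumption:committed-correct-replicas} gives a correct replica that committed $T$, the non-lifted low-policy read enters Algorithm~\ref{alg:policy-leak}, and $p_{O'} < p_O \le P_T$ forces an abort.

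The paper has no counterpart to your first step. It reads ``$T$ reads $O_1$ and $O_1$ is not lifted'' directly as ``$O_1$ lies in the (post-lift) read set the replica checks.'' Its argument is therefore purely replica-side and uses no bound on Byzantine clients.

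Your step guards against a Byzantine initiator that omits the read or mislabels it as lifted, and the hash-binding reasoning for that is sound. But it imports $f_O \le p_O$ from Lemma~\ref{lemma:object-safety-fixed-policy}, and the paper's proof of this lemma never uses that bound. What you have actually proved is a stronger, semantic version under an extra hypothesis: the declared read set must coincide with the read set of the faithful execution. That version genuinely needs the bound, since $p_O+1$ colluding Byzantine clients can sign any hash and hide the read.

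For the read-set-based notion of influence the paper uses, your final paragraph alone proves the lemma with no bound. That notion is also the only one the framework can enforce, because a value from $O'$ can always reach $O$ through unvalidated transaction inputs, which the paper places out of scope. So either drop step 1, or state it explicitly as an additional guarantee that holds under $f_O \le p_O$.
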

\begin{proof}
    Suppose for the sake of contradiction that \sys{} does not uphold noninterference in this case.
    Then there exists objects $O_1, O_2$ with policies $p_1, p_2$, respectively, such that $p_1 < p_2$ but $O_1$ influences $O_2$.
    In particular, this means there exists a committed transaction $T$ that reads from $O_1$ and writes to $O_2$.

    By Assumption~\ref{assumption:committed-correct-replicas}, there exists at least one correct replica which committed $T$.
    At that replica, $T$ must have passed the policy information flow check (Algorithm~\ref{alg:policy-leak}).
    Since $O_1$ is not lifted, it is included in the read set for the information flow check.
    When the check reaches $O_1$ in the read set, its policy $p_1$ will be extracted (Line~\ref{alg:policy-leak-line:get-read-policy}).
    But Line~\ref{alg:policy-leak-line:if-implies} will explicitly compare $p_1$ against the maximum policy of the write set, which is at least $p_2$.
    A correct replica must then abort $T$.
    This is a contradiction.
\end{proof}

\subsection{Governance Transaction Serializability}


\sys{} requires serializability for governance transactions.
To formally define how a governance transaction can conflict with a regular transaction, we extend Adya's formalism~\cite{adya-thesis}.

An execution produces a direct serialization graph (DSG) whose vertices are committed transactions, denoted $T_t$, where $t$ is the unique timestamp identifier. 
Edges between regular transactions in the DSG are one of three types:
\begin{itemize}
    \item $T_i \xrightarrow{ww} T_j$ if $T_i$ writes the version of object $x$ that precedes $T_j$ in the version order.
    \item $T_i \xrightarrow{wr} T_j$ if $T_i$ writes the version of object $x$ that precedes $T_j$ reads.
    \item $T_i \xrightarrow{rw} T_j$ if $T_i$ reads the version of object $x$ that precedes $T_j$'s write.
\end{itemize}
We assume, as does Adya, that if an edge exists between $T_i$ and $T_j$, then $T_i \ne T_j$. 

For an object $x$, let $p_x$ be its policy.
When a regular transaction writes to $x$, it uses (implicitly reads) a version of $p_x$ for its endorsement check.
The edges which can exist between a governance transaction $G$ and a regular transaction $T$ are:
\begin{itemize}
    \item $G_i \xrightarrow{wr} T_j$ if $G_i$ writes the version of $p_x$ that is implicitly read in $T_j$'s write to $x$.
    \item $T_i \xrightarrow{rw} G_j$ if $T_i$ writes a version of $x$ that implicitly reads the version of $p_x$ that precedes $G_j$'s write.
\end{itemize}



We assume that governance transactions are write-only and issued one at a time by the application.
As such, we exclude edges between governance transactions for the DSG.


We prove here that \sys{}'s governance transactions, when integrated into a serializable BFT database with optimistic concurrency control (OCC), are serializable.

\begin{restateTheorem}{thm:gov-txn-serializable}
    \thmGovTxnSerializable{}
\end{restateTheorem}
\begin{proof}
    We first show that at each correct replica, the set of committed transactions are serializable. 
    To do this, we show that if there exists an edge $T_i \to T_j$ in the DSG, then $i < j$. 
    Since \sys{} does not modify regular transaction concurrency control, we only need to show this holds between regular and governance transactions.

    \myhdr{Case: $G_i \xrightarrow{wr} T_j$}
    $T_j$ only implicitly reads policy versions with timestamp before $j$, and thus we must have $i < j$.

    \myhdr{Case: $T_i \xrightarrow{rw} G_j$}
    Suppose for the sake of contradiction there exists such an edge but $j < i$. 
    In particular, let $T_i$ be the regular transaction and let $G_j$ the governance transaction.
    We split our analysis based on whether the replica ran the CC check on $T_i$ first or $G_j$ first, and in each case show that this violation is not possible.

    \begin{enumerate}
        \item 
            \myhdr{CC check on $G_j$ first}
            During the CC check on $T_i$, each write set object policy is checked for conflicts against (replica-voted) committed governance transactions 
            (Algorithm~\ref{alg:gov-txn-cc} Lines~\ref{alg:gov-txn-cc-line:for-regular-vs-gov}--\ref{alg:gov-txn-cc-line:endfor-regular-vs-gov}).
            However, $G_j$ is already voted to be committed, and this check would have found $ts_p < j < i$ and aborted $T_i$.
            This is a contradiction.
        \item 
            \myhdr{CC check on $T_i$ first}
            When the CC check on $G_j$ occurs, $T_i$ is (replica-voted) committed and thus has recorded an implicit policy read to $p$ at time $i$ 
            (Algorithm~\ref{alg:gov-txn-cc} Lines~\ref{alg:gov-txn-cc-line:for-add-implicit}--\ref{alg:gov-txn-cc-line:endfor-add-implicit}).
            When the CC check for $G_j$ occurs, $G_j$ will not be allowed to commit since it would have caused $T_i$'s implicit policy read to miss a policy write 
            (Algorithm~\ref{alg:gov-txn-cc} Lines~\ref{alg:gov-txn-cc-line:for-check-implicit}--\ref{alg:gov-txn-cc-line:endfor-check-implicit}).
            This is a contradiction.
    \end{enumerate}

    In all cases, we have a contradiction.
    This implies $i < j$.

    Thus, in the DSG for a correct replica, if there exists an edge $T_i \to T_j$, then $i < j$.
    The set of transactions is serializable if the DSG has no cycles.
    Assume for the sake of contradiction that there exists a cycle consisting of $n$ transactions $T_{ts_1}, \dots, T_{ts_n}$. 
    This implies that $ts_1 < \dots < ts_n < ts_1$.
    However, transaction timestamps are totally ordered.
    This is a contradiction, and thus a correct replica only commits transactions which are serializable.

    Then by Assumption~\ref{assumption:iso-correct-replicas}, if every correct replica only commits transactions which are serializable, then the set of committed transactions is serializable.
\end{proof}

\subsection{Safety with Policy Versions}

We now prove that \sys{} maintains safety with policy versions.

\begin{lemma}\label{lemma:object-safety}
    \lemmaObjectSafety{}
\end{lemma}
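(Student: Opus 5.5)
The plan is to lift the argument of Lemma~\ref{lemma:object-safety-fixed-policy} to the versioned setting. The only new ingredient is showing that the policy value a correct replica uses when checking endorsements for a write to $O$ is the version actually in effect for that write in the serialization order. I read the statement as follows: whenever the number of Byzantine clients $f_O$ is at most the policy $p_O$ governing $O$ at the point where a transaction's write is serialized, $O$ is updated only by valid committed transactions. I will argue by contradiction. Suppose some invalid, globally committed, non-governance transaction $T$ with timestamp $ts_T$ writes $O$. By Assumption~\ref{assumption:committed-correct-replicas}, some correct replica $R$ locally committed $T$, so at $R$ the transaction $T$ passed both Algorithm~\ref{alg:endorse-val} and Algorithm~\ref{alg:gov-txn-cc}.

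The first step pins down which policy version $R$ used. In Algorithm~\ref{alg:gov-txn-cc}, $R$ fetched $(p, ts_p) = \mathit{GetPolicy}(O, ts_T)$. It required $ts_p$ to be the latest version of $p$, and no committed governance transaction writes $p$ with timestamp in $(ts_p, ts_T)$. This implicit read creates the edge $G_{ts_p} \xrightarrow{wr} T$ in the DSG, and any later policy update produces $T \xrightarrow{rw} G$. By Theorem~\ref{thm:gov-txn-serializable}, the globally committed history, including governance transactions, is serializable. Hence in the serial order, $T$'s write to $O$ sees exactly the policy value $p_O$ that $R$ checked against. To rule out a Byzantine replica having supplied a different version at some other correct replica, I will rely on Assumption~\ref{assumption:conflict-iso-correct-replicas}: any governance transaction and $T$ are both committed at some common correct replica, which would have detected the conflict. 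Consequently, every globally committed governance transaction is ordered consistently relative to $T$, and $p_O$ is the policy in effect for this write.

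The second step reproduces the counting argument. Since $R$ checked $|E_T| \ge p_O$ using valid signatures over $H_T$, at least $p_O + 1$ distinct clients, the endorsers plus the initiator, vouch for the same read and write sets of $T$. With $f_O \le p_O$, either the initiator is correct, so $T$ is valid, or some endorser is correct. In the latter case Lemma~\ref{lemma:val-client-endorse-valid} forces $T$ to be valid. Either way this contradicts the assumption that $T$ is invalid. Governance transactions that write policy objects are handled in the same way, because by default their endorsement requirement is the policy they modify, checked at its current version.

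I expect the first step to be the main obstacle. Its difficulty is arguing that the version checked by one correct replica coincides with the globally serialized policy, rather than a stale one. The risk is a race: a governance transaction commits at one quorum while $T$ commits at another with the old, weaker policy. I would first try to close this race by combining the bidirectional aborts of Algorithm~\ref{alg:gov-txn-cc} with Assumption~\ref{assumption:conflict-iso-correct-replicas}. If that proves too weak, the fallback is to phrase the lemma relative to the policy version $T$ implicitly read. That phrasing is justified by the appendix remark that writes under the then-current policy are considered safe.
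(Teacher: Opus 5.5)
Your overall plan matches the paper's: pin down the policy version using Assumption~\ref{assumption:conflict-iso-correct-replicas} and Theorem~\ref{thm:gov-txn-serializable}, then rerun the counting argument of Lemma~\ref{lemma:object-safety-fixed-policy}. There is, however, a gap in which replica you apply the counting argument at. You fix $R$ via Assumption~\ref{assumption:committed-correct-replicas} and count endorsements at $R$. That requires the version $R$ got from $\mathit{GetPolicy}(O, ts_T)$ to be $p_O$, and nothing guarantees this. In Basil/Pesto the local states of correct replicas diverge. The ``latest version'' check in Algorithm~\ref{alg:gov-txn-cc} is relative to $R$'s own policy store. So $R$ may not yet have committed the latest governance transaction $G$ (with $ts_G < ts_T$) that updated $O$'s policy, and may have validated $T$ against an older, weaker value. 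The danger here is correct replicas with stale views, not a Byzantine replica supplying a version.

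Your DSG step does not rescue this, because implicit policy reads are recorded per replica. $R$ may record $T$ as reading the old version, which from its view gives $T \xrightarrow{rw} G$. Meanwhile a correct replica that validated $G$ first records $G \xrightarrow{wr} T$. Serializability therefore does not let you conclude that $R$'s view is the policy in effect. You invoke Assumption~\ref{assumption:conflict-iso-correct-replicas} (``which would have detected the conflict''), which points at the right replica, but you never switch to it: step two still says ``since $R$ checked $|E_T| \ge p_O$''.

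The fix is the paper's route. Let $G$ be the latest governance transaction updating $O$'s policy relative to $T$. Assumption~\ref{assumption:conflict-iso-correct-replicas} gives a correct replica $R'$ that committed both $G$ and $T$. At $R'$, validating $T$ before $G$ would have caused $G$ to abort, by the bidirectional checks of Algorithm~\ref{alg:gov-txn-cc} underlying Theorem~\ref{thm:gov-txn-serializable}. So $R'$ validated $G$ first and checked $T$'s endorsements against $p_O$, and the counting argument goes through at $R'$. Assumption~\ref{assumption:committed-correct-replicas} is then not needed.

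Drop the fallback as well. Restating the lemma relative to the version $T$ implicitly read would weaken it to exactly the stale-policy guarantee the paper rules out: the appendix says writes cannot circumvent the \emph{latest} policy. Its ``old version is safe'' remark concerns reads for noninterference, not writes.
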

\begin{proof}
    Suppose for the sake of contradiction that there exists an invalid committed transaction $T$ which updates $O$.
    Let $G$ be the latest governance transaction relative to $T$ which updated the policy $p_O$.
    By Assumption~\ref{assumption:conflict-iso-correct-replicas}, there exists at least one correct replica which committed $T$ and $G$.
    By Theorem~\ref{thm:gov-txn-serializable}, that replica must have ordered $G$ before $T$.
    Thus, $T$ must have passed \sys{}'s endorsement checks relative to $p_O$.
    The rest of the proof follows by the same logic as in Lemma~\ref{lemma:object-safety-fixed-policy}.
\end{proof}

\begin{restateTheorem}{thm:sys-safety}
    \thmSysSafety{}
\end{restateTheorem}
\begin{proof}
    By Lemmas~\ref{lemma:object-safety} and \ref{lemma:all-object-safety} (note that Lemma~\ref{lemma:all-object-safety} does not depend on a fixed policy).
\end{proof}

\begin{lemma}\label{lemma:replica-seen-two-gov-txns}
    \lemmaReplicaSeenTwoGovTxns{}
\end{lemma}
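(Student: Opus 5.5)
The natural route is to reduce the statement to the backend guarantees, rather than reason about \sys{}'s protocol directly. I read the lemma as saying that for any two globally committed governance transactions $G_1$ and $G_2$ (in particular, two successive updates to the same policy id), some correct replica has locally committed both and orders them consistently with their timestamps. Governance transactions are, at the backend level, ordinary transactions over policy objects. So the first step is to note that $G_1$ and $G_2$ are both globally committed transactions of the BFT database. Assumption~\ref{assumption:conflict-iso-correct-replicas} then applies directly and yields a correct replica $R$ that has locally committed both. Assumption~\ref{assumption:committed-correct-replicas} alone would only give a correct replica for each one separately, which is why the pairwise assumption is the one to use.

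The second step is to show that $R$'s view of the two is coherent, namely that it agrees on which is the later policy version. Here I would use the model's assumption that governance transactions are write-only and issued one at a time by the application. Consequently $G_2$ is issued only after $G_1$ has globally committed, so $ts_{G_1} < ts_{G_2}$. Because $R$ is correct, it orders its versioned policy store by timestamp, so $R$ records $G_2$'s value as the newer version of each shared policy id.

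The third step is a sanity check against the governance OCC. I would verify that neither transaction is rejected at $R$ merely because of the other. This holds because Algorithm~\ref{alg:gov-txn-cc} only checks governance transactions against regular transactions' implicit policy reads, and we have excluded governance--governance edges from the DSG. Hence the pair lives in a single serial order at $R$, consistent with Theorem~\ref{thm:gov-txn-serializable}.

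The main obstacle I expect is the second step: justifying that the correct replica orders the pair in timestamp order, rather than just knowing both exist. In Basil/Pesto-style inconsistent replication, replicas may receive commits out of order. My first attempt is to argue via versioning: correct replicas store policy values keyed by timestamp rather than by arrival order, so arrival order is irrelevant. If that is insufficient, I would fall back on the ``one at a time'' issuance, which forces $G_1$ to be globally committed, and hence visible at a quorum, before $G_2$ is even created.
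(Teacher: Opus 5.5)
\textbf{There is a genuine gap: you prove a weaker statement than the lemma.} The lemma also involves a third transaction. For a globally committed transaction $T$ and governance transactions $G_1, G_2$, it asserts that some correct replica has locally committed all three. Equivalently, the quorum that commits $T$ contains a correct replica that has also committed $G_1$ and $G_2$.

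This is exactly what Lemma~\ref{lemma:nonint-gov-txn} uses. The replica must both vote on $T$ and already hold the relevant policy versions. There, $G_1$ and $G_2$ create the current versions of $p_1$ (the policy of the read object) and $p_2$ (the policy of the written object), not two successive versions of one policy id. Your steps two and three (timestamp ordering of $G_1, G_2$ at $R$, and absence of governance--governance conflicts) are not part of the claim.

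\textbf{Why your reduction cannot be repaired.} Assumption~\ref{assumption:conflict-iso-correct-replicas} is only pairwise, and pairwise intersection does not give a three-way intersection. In Basil/Pesto with $n = 5f+1$, a global commit needs only $3f+1$ replicas. Three such sets are guaranteed to share only $3(3f+1) - 2(5f+1) = 1-f$ replicas, so for $f \ge 1$ they may share none, let alone a correct one.

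\textbf{The missing idea.} Governance transactions face a stronger commit requirement: they must be committed by \emph{all} correct replicas, hence by at least $n-f = 4f+1$ replicas. Then $G_1$ and $G_2$ share at least $2(4f+1)-(5f+1) = 3f+1$ replicas. That set meets $T$'s commit quorum of $3f+1$ in at least $f+1$ replicas, so at least one of them is correct. For the SMR-based backends the claim is immediate, since every correct replica commits the same sequence of transactions. This is why the paper proves the lemma backend by backend rather than from the abstract assumptions.
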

\begin{proof}
    We prove this for the systems we instantiate \sys{} on.

    For the SMR-based systems, all transactions are ordered through BFT consensus.
    Thus, every correct replica has committed the same set of transactions; in particular, the most recent governance transactions will be committed.

    For Basil/Pesto, we use a quorum argument.
    $G_1$ and $G_2$ must have been committed by all correct replicas, or at least $n-f = 4f+1$ replicas.
    Thus, there must be at least $3f+1$ replicas which have committed both $G_1$ and $G_2$.
    In order for $T$ to be globally committed, there must exist at least $3f+1$ replicas which committed it.
    Thus, there are at least $f+1$ replicas (at least one correct replica) which commit $T$ and both $G_1$ and $G_2$. 
\end{proof}

\begin{lemma}\label{lemma:nonint-gov-txn}
    \lemmaNonintGovTxn{}
\end{lemma}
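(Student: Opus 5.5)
The plan is to argue by contradiction, mirroring Lemma~\ref{lemma:nonint-fixed-policy}, but now tracking policy versions. Suppose the statement fails. Then there are objects $O_1, O_2$ whose policies, under the governance transactions in effect, satisfy $p_1 < p_2$. There is also a globally committed regular transaction $T$ that reads $O_1$ without lifting it and writes $O_2$.

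Let $G_1$ and $G_2$ be the latest governance transactions, relative to $T$, that set the policy ids of $O_1$ and $O_2$ respectively. If one of them does not exist, the initial fixed policy plays its role and the argument reduces to the fixed-policy case. By Lemma~\ref{lemma:replica-seen-two-gov-txns}, there is a correct replica $r$ that has locally committed $T$, $G_1$ and $G_2$. All subsequent reasoning is carried out at $r$, where Algorithms~\ref{alg:policy-leak} and~\ref{alg:gov-txn-cc} are guaranteed to have run faithfully.

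\emph{Write side.} The first step is to show that $r$ evaluated $T$'s write to $O_2$ against the version of $p_2$ installed by $G_2$. Since $T$ implicitly reads the policy id of $O_2$, there is an edge between $G_2$ and $T$ in the extended DSG. By Theorem~\ref{thm:gov-txn-serializable}, and the fact that $G_2$ is the latest policy write relative to $T$, that edge must be $G_2 \xrightarrow{wr} T$, so $ts_{G_2} < ts_T$. Moreover, the latest-version check in Algorithm~\ref{alg:gov-txn-cc} (Lines~\ref{alg:gov-txn-cc-line:if-latest}--\ref{alg:gov-txn-cc-line:endif-latest}) forces $T$ to use exactly this version. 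Hence the quantity $P_T$ computed in Algorithm~\ref{alg:policy-leak} is at least $p_2$.

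\emph{Read side.} This is where I expect the main obstacle. Reads carry no implicit policy read, so serializability alone does not order $G_1$ relative to $T$. Moreover, $r$ evaluates the read of $O_1$ against the policy version current when the read value was produced. I would split into two cases.
\begin{enumerate}
\item The version of $O_1$ that $T$ read was written after $G_1$ in $r$'s order. Here I would use the fact that $r$ has committed $G_1$, which is exactly what Lemma~\ref{lemma:replica-seen-two-gov-txns} supplies, to conclude that $r$ looks up $p_1$ at Line~\ref{alg:policy-leak-line:get-read-policy}. Since $O_1$ is not lifted, the comparison at Line~\ref{alg:policy-leak-line:if-implies} sees $p_1 < p_2 \le P_T$, and $r$ votes to abort $T$. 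This contradicts $r$ having committed $T$.
\item The value was written before $G_1$, under an older policy version. By the clarified notion of noninterference for versioned policies, where reading a value written under an old policy is deemed safe, this flow is not a violation. This contradicts the choice of $T$.
\end{enumerate}

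Both cases yield a contradiction, which establishes the lemma. The delicate part is justifying that a single correct replica simultaneously has the right versions for both the write policy and the read policy. That requirement is precisely why the two-governance-transaction intersection of Lemma~\ref{lemma:replica-seen-two-gov-txns} is invoked, rather than Assumption~\ref{assumption:conflict-iso-correct-replicas} alone.
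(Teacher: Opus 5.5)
The overall skeleton matches the paper's: argue by contradiction, use Lemma~\ref{lemma:replica-seen-two-gov-txns} to get a correct replica $r$ that committed $T$, $G_1$ and $G_2$, show $r$ evaluated both policies at the right versions, and conclude $r$ must abort. Your write side is fine. $T$ implicitly reads $p_2$'s policy id, so the bidirectional check of Algorithm~\ref{alg:gov-txn-cc} orders $G_2$ against $T$ at $r$.

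The gap is in read-side Case~1, at exactly the obstacle you flagged and then did not resolve. Lemma~\ref{lemma:replica-seen-two-gov-txns} only says that $r$ has \emph{eventually} committed $G_1$. It says nothing about whether $G_1$'s version of $p_1$ was already in $r$'s policy store when $r$ ran the information-flow check on $T$. You should reason about that check with the versioned Algorithm~\ref{alg:policy-leak-versioned}, not Algorithm~\ref{alg:policy-leak}. Suppose $r$ validated $T$ first. Then the lookup at Line~\ref{alg:policy-leak-versioned-line:get-read-policy} returns an older version of $p_1$, which $r$ treats as the latest and which may well be $\ge P_T$, so $r$ votes commit. Since $T$ carries no implicit read of $p_1$, nothing in your argument stops $r$ from committing $G_1$ afterwards. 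So ``$r$ has committed $G_1$'' does not give ``$r$ looked up $p_1$''.

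Your case split also blurs two orders. If ``written after $G_1$ in $r$'s order'' means timestamp order, Case~1 is missing the timing argument just described. If it means $r$'s processing order, Case~2 is unjustified. A version with $ts_{O_1} > ts_{p_1}$ that $r$ happened to process before $G_1$ is not a value written under an older policy in the serialization order.

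The missing idea, which the paper's proof supplies, is to argue through $T'$, the transaction that wrote the version $ts_{O_1}$ of $O_1$ that $T$ read. Unlike $T$, $T'$ \emph{wrote} $O_1$, so it did implicitly read $p_1$'s policy id. Hence Theorem~\ref{thm:gov-txn-serializable}, via Algorithm~\ref{alg:gov-txn-cc}, constrains $G_1$ relative to $T'$. And since $T$ depends on $T'$, $T'$ must be prepared or committed before $T$ is validated.

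The paper then splits on whether the CC check of $G_1$ or of $T$ ran first at $r$:
\begin{itemize}
\item If $T$, and hence $T'$, came first, $T'$ was held to a policy version older than $ts_{p_1}$ even though $ts_{p_1} < ts_{O_1}$. Serializability then forces $G_1$ or $T'$ to abort, a contradiction.
\item Otherwise $G_1$ was already installed when $T'$, and transitively $T$, was validated, so $r$ really does use $p_1$ for $O_1$.
\end{itemize}
With that step added, your Case~1 closes. Your Case~2 then becomes precisely the $ts_{O_1} < ts_{p_1}$ situation excused by the versioned reading of noninterference.
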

\begin{proof}
    Suppose for the sake of contradiction that there exists a committed transaction which violates noninterference.
    Let $T$ be such a transaction with timestamp $ts_T$, and let $T$ read object $O_1$ with version $ts_{O_1}$ and write object $O_2$.
    In the global serialization order, let $O_1$ have policy $p_1$ with version $ts_{p_1}$ and let $O_2$ have policy $p_2$ with version $ts_{p_2}$.
    Let $G_1$ be the governance transaction which creates $p_1$'s version $ts_{p_1}$ and let $G_2$ be the governance transaction which creates $p_2$'s version $ts_{p_2}$.
    To violate noninterference, there must be a quorum of replicas which commits $T$ despite $p_1 < p_2$.

    By Lemma~\ref{lemma:replica-seen-two-gov-txns}, there exists at least one correct replica in this quorum which has committed both $G_1$ and $G_2$ and votes to commit $T$.
    We will first prove that this correct replica must have used $p_1$ for $O_1$ in its information flow check.
    We split our analysis based on whether the replica ran the CC check on $T$ first or $G_1$ first.
    Additionally, let $T'$ be the transaction that creates $O_1$'s version $ts_{O_1}$.


    \myhdr{Case 1: CC check on $G_1$ first}
    If $G_1$ is globally committed by the time the information flow check on $T$ runs, then the check will capture policy $p_1$ with version $ts_{p_1}$ for $O_1$ 
    (Algorithm~\ref{alg:policy-leak-versioned} Line~\ref{alg:policy-leak-versioned-line:get-read-policy}). 

    Now consider the case when $G_1$ is only locally committed by the time the information flow check on $T$ runs.
    If this were possible, then $T$ would not read $ts_{p_1}$ as the latest version for $p_1$.
    However, since $T$ depends on $T'$ for its read to $O_1$, $T'$ must have been globally prepared or committed by that point as well.
    By Theorem~\ref{thm:gov-txn-serializable}, we know that if $T'$ and $G_1$ both commit, then they are serializable.
    If they both globally commit, since $ts_{p_1} < ts_{O_1}$, $T'$ must be held to the latest policy in $G$.
    This means that $G_1$ must already be globally committed by the time $T'$ prepares (and, transitively, $T$ as well).
    Otherwise, one of $G_1$ or $T'$ would be aborted.
    This contradicts $G_1$ only being prepared.

    \myhdr{Case 2: CC check on $T$ first}
    Since $T$ depends on $T'$ for its read to $O_1$, $T'$ must be globally prepared or committed before $T$.
    But then $T'$ cannot be held to the latest policy in $G_1$.
    In order to maintain serializability (Theorem~\ref{thm:gov-txn-serializable}), $G_1$ must have aborted in this case.
    This is a contradiction.

    Thus, it must be the case at least one correct replica used $p_1$ for $O_1$ in its information flow check.
    By Theorem~\ref{thm:gov-txn-serializable}, if both $G_2$ and $T$ commit, then the replica must use $p_2$ for the write to object $O_2$.
    But we have $p_1 < p_2$, so the correct replica must abort the transaction
    (Algorithm~\ref{alg:policy-leak-versioned} Line~\ref{alg:policy-leak-versioned-line:if-implies}).
    This is a contradiction.
\end{proof}

\begin{restateTheorem}{thm:noninterference}
    \thmNoninterference{}
\end{restateTheorem}
\begin{proof}
    By Lemma~\ref{lemma:nonint-fixed-policy}, \sys{} does not violate noninterference for fixed policies.
    By Lemma~\ref{lemma:nonint-gov-txn}, governance transactions do not introduce violations for noninterference.
    Thus, in all cases, \sys{} does not violate noninterference.
\end{proof}

\section{Extended Technical Discussion}

This section outlines supplementary technical details, optimizations, and considerations for \sys{} beyond those discussed in the main technical sections.

\subsection{Transaction Commit with Versioned Policies}

We show \sys{}'s endorsement validation and policy information flow checks with a versioned policy store in Algorithms~\ref{alg:endorse-val-versioned} and \ref{alg:policy-leak-versioned}, respectively.
We highlight the differences below.

For endorsement validation, for each write set key, the replica gets the latest policy version relative to $T$ 
(Line~\ref{alg:endorse-val-versioned-line:get-policy}).
Assuming transaction $T$ has timestamp $ts_T$, the replica gets the latest policy earlier than $ts_T$.

For policy information flow, the maximum write set policy is similarly modified to use the latest policy for each key
(Line~\ref{alg:policy-leak-versioned-line:max-write-set-policy}).
For each read set key, the replica checks what policy was in place when the key was written 
(Line~\ref{alg:policy-leak-versioned-line:get-read-policy}).
If there exists a more recent version of that policy, then this implies that the version in the read set was admitted under an old policy 
(Lines~\ref{alg:policy-leak-versioned-line:if-latest-policy}--\ref{alg:policy-leak-versioned-line:endif-latest-policy}).
In this case we assume the read value was safe and do not check for information flow.

\begin{algorithm}[t]
\caption{Endorsement Validation ($T$)}\label{alg:endorse-val-versioned}
\begin{algorithmic}[1]
\State $E_T = \{\}$
\For{$e \in \sysEndorses{T}$}
    \If{$\mathit{e.sig} \text{ is valid} \land \mathit{e.hash} == H_T$}
        \State $E_T = E_T \cup e$
    \EndIf
\EndFor
\For{$\mathit{key} \in \writeset{T}$}
    \State $p = \mathit{GetPolicy(key, \ ts_T)}$\label{alg:endorse-val-versioned-line:get-policy}
    \If{$|E_T| < p$}
        \State \Return $\False$
    \EndIf
\EndFor
\State \Return $\True$
\end{algorithmic}
\end{algorithm}

\begin{algorithm}[t]
\caption{Policy Information Flow ($T$)}\label{alg:policy-leak-versioned}
\begin{algorithmic}[1]
\State $P_T = \max(\{\mathit{GetPolicy(key, \ ts_T)}: \ \mathit{key} \in \writeset{T}\})$\label{alg:policy-leak-versioned-line:max-write-set-policy}
\For{$\mathit{key, \ version} \in \readset{T}$}
    \State $p, \ \mathit{ts_p} = \mathit{GetPolicy(key, \ version)}$\label{alg:policy-leak-versioned-line:get-read-policy}
    \If{$\lnot$ ($\mathit{ts_p}$ is the latest version of $p$)}\label{alg:policy-leak-versioned-line:if-latest-policy}
        \State \textbf{continue}
    \EndIf\label{alg:policy-leak-versioned-line:endif-latest-policy}
    \If{$p < P_T$}\label{alg:policy-leak-versioned-line:if-implies}
        \State \Return $\False$
    \EndIf
\EndFor
\State \Return $\True$
\end{algorithmic}
\end{algorithm}

\subsection{Unique Transaction Identifier with Endorsements}

In \sys{}, endorsements give a client direct control over a replica's decision on its transaction.
For some BFT databases ({\em e.g.}, Basil~\cite{basil}, Pesto~\cite{pesto}), this gives Byzantine clients undue influence on correct client transaction outcomes.
We use Basil as a representative example to illustrate when the vulnerability is possible and how \sys{} addresses it.

Basil makes two design choices which, when na\"{i}vely combined with \sys{}'s endorsements, creates a liveness attack for Byzantine clients.
First, Basil makes clients the coordinators of their own transactions.
This prevents correct clients from relying upon Byzantine replicas, but also introduces the risk for Byzantine clients to equivocate transaction results.
However, Byzantine clients can only equivocate transaction results when replica responses allow them to do so; it is not a strategy that can be pursued reliably in Basil without \sys{}.
Second, for performance, transactions are permitted to read prepared (but not committed) transaction writes.
If correct client transactions acquire dependencies on equivocated transactions, they can subsequently be aborted.

To perform this attack, a Byzantine client can send their transaction with a sufficient set of endorsements to the minimum number of replicas to make it visible ($f+1$ in Basil).
After the transaction is made visible, the Byzantine client then sends the same transaction, but with insufficient endorsements, to a distinct set of replicas to abort the transaction (at least $3f+1$ in Basil).
Correct replicas, upon receiving this transaction, will vote to abort it due to the transaction failing the endorsement check. 
A correct client that acquired a dependency on this transaction while it was visible will eventually be forced to abort its own transaction.

To prevent this, \sys{} ensures that equivocated transactions are considered unique.
Upon commit, a transaction $T$ receives a unique identifier $id_T$, which is computed as a cryptographic hash of $T$ to prevent Byzantine clients from manipulating $T$'s content. 
\sys{} modifies the transaction identifier to also include endorsements. 
Identical transactions with different sets of endorsements will be considered different transactions by the system.
Aborting a transaction with a different set of endorsements will not affect the results of the original transaction.

\subsection{Sharing Validation Information}

In \sys{}, initiating clients must share information about their transaction to validation clients.
In particular, clients must share what transactions they run, and what reads and writes are made during the transaction.
For some BFT systems ({\em e.g.}, Basil~\cite{basil}, Pesto~\cite{pesto}), this introduces the possibility for Byzantine validation clients to attack the liveness of the system by creating conflicting transactions.
We use Basil as a representative example to discuss the vulnerability.

Basil allows clients to choose the timestamps of their transactions for concurrency control. 
This allows clients to choose when their transactions commit in the serializable schedule of transactions.
Byzantine clients can design read-only transactions with conflict windows that overlap with the original transactions.
Since \sys{} does not require read-only transactions to collect endorsements, the likelihood that the conflicting transaction commits before the original transaction is high.

Requiring endorsements for read-only transactions can prevent this attack.
Application execution validity is then extended to read-only transactions, preventing Byzantine clients from creating arbitrary read-only transactions.
However, committing read-only transactions becomes more expensive as they would now also incur validation costs.

We expect \sys{} to be used in settings where Byzantine attacks can occur but are infrequent, an assumption commonly held in permissioned blockchains.
In order to uphold their reputation in a permissioned system, Byzantine clients are unlikely to participate in detectable Byzantine attacks, and prefer to preserve liveness of the system if they cannot break safety~\cite{peerreview,flexiblebft}.
Under such assumptions, applications do not need to extend \sys{} with endorsements for read-only transactions.

\subsection{Generalizing Policies}

In principle, \sys{} is compatible with any integrity policy that describes whether a transaction with a set of endorsements can modify an object.
We formally define this generalization as follows.

\begin{definition}
    Let the set of clients be $[n] = \{1, 2, \dots, n\}$.
    A \emph{policy} is as a Boolean-valued function that takes a subset of clients as input, and outputs true if a transaction endorsed by that subset of clients is allowed to write, and false otherwise.
    That is, a policy $p$ is a function $p: 2^{[n]} \to \{0,1\}$. 
\end{definition}

For a set of client endorsements $E \subseteq [n]$, if $p(E) = 1$, then the set of endorsements satisfies the object's policy.
Otherwise, if $p(E) = 0$, it does not.
For example, a policy that requires at least two clients to endorse any transaction could be written as $p(E) = (|E| \ge 2)$. 

Next, we generalize information flow accordingly.
As before, a transaction that reads from an object of policy $p_1$ and writes to an object of policy $p_2$ has safe information flow if $p_1$ is at least as strong as $p_2$. 
We formally define this partial ordering below.

\begin{definition}    
    For two policies $p_1, p_2$, we write $p_1 \sqsubseteq p_2$ to mean that $p_1$ is \emph{at least as strong as} $p_2$.
    \begin{align*}
        p_1 \sqsubseteq p_2 \iff \forall S \subseteq [n], (p_1(S) = 1) \implies (p_2(S) = 1)
    \end{align*}
\end{definition}

More complex policies over client sets can cause endorsements to grow rapidly when accessing objects with non-overlapping policies. 
This creates practical challenges for performance and liveness because of the dependence on specific clients.

\end{document}